\newtheorem{theorem}{Theorem}
\newtheorem{lemma}{Lemma}
\newtheorem{definition}{Definition}
\newtheorem{corollary}{Corollary}
\newtheorem{remark}{Remark}                               
\newtheorem*{remark*}{Remark} 				
\newtheorem{example}{Example}                             
\newtheorem*{example*}{Example}                          
\newtheorem*{note*}{Note} 				
\DeclareMathOperator{\Diag}{Diag} 
\newtheorem{fact}{\bf Fact}
\newcommand{\splitatcommas}[1]{%
	\begingroup
	\begingroup\lccode`~=`, \lowercase{\endgroup
		\edef~{\mathchar\the\mathcode`, \penalty0 \noexpand\hspace{0pt plus 1em}}%
	}\mathcode`,="8000 #1%
	\endgroup
}
\begin{document}
	\title{Construction of all MDS and involutory MDS matrices}
	\author{Yogesh Kumar\footnote{Scientific Analysis Group, DRDO, Metcalfe House Complex, Delhi-110054\newline email: \texttt{adhana.yogesh@gmail.com}}, 
		P.R.Mishra\footnote{Scientific Analysis Group, DRDO, Metcalfe House Complex, Delhi-110054\newline email: \texttt{prasanna.r.mishra@gmail.com}},
		Susanta Samanta\footnote{R. C. Bose Centre for Cryptology and Security, Indian Statistical Institute, Kolkata-700108\newline email: \texttt{susanta.math94@gmail.com}},
		Kishan Chand Gupta\footnote{Applied Statistics Unit, Indian Statistical Institute, Kolkata-700108\newline email: \texttt{kishan@isical.ac.in}},
		Atul Gaur\footnote{Department of Mathematics, University of Delhi, Delhi-110007 \newline email: \texttt{gaursatul@gmail.com}}}
	\date{}
	\linespread{1.0}
	\maketitle

\begin{abstract}\noindent
In this paper, we propose two algorithms for a hybrid construction of all $n\times n$ MDS and involutory MDS matrices over a finite field $\mathbb{F}_{p^m}$, respectively. The proposed algorithms effectively narrow down the search space to identify $(n-1) \times (n-1)$ MDS matrices, facilitating the generation of all $n \times n$ MDS and involutory MDS matrices over $\mathbb{F}_{p^m}$. To the best of our knowledge, existing literature lacks methods for generating all $n\times n$ MDS and involutory MDS matrices over $\mathbb{F}_{p^m}$. In our approach, we introduce a representative matrix form for generating all $n\times n$ MDS and involutory MDS matrices over $\mathbb{F}_{p^m}$. The determination of these representative MDS matrices involves searching through all $(n-1)\times (n-1)$ MDS matrices over $\mathbb{F}_{p^m}$. Our contributions extend to proving that the count of all $3\times 3$ MDS matrices over $\mathbb{F}_{2^m}$ is precisely $(2^m-1)^5(2^m-2)(2^m-3)(2^{2m}-9\cdot 2^m+21)$. Furthermore, we explicitly provide the count of all $4\times 4$ MDS and involutory MDS matrices over $\mathbb{F}_{2^m}$ for $m=2, 3, 4$. \\

\noindent\textbf{Keywords:} Diffusion Layer, MDS matrix, Involutory matrix, Finite field

\end{abstract}


\section{Introduction}\label{Sec:Introduction}
Claude Shannon, in his paper ``Communication Theory of Secrecy Systems''~\cite{shan}, introduced the concepts of confusion and diffusion, which play a crucial role in the design of symmetric key cryptographic primitives. In general, the diffusion property is attained through the use of a linear layer, which can be represented as a matrix. This matrix is designed to produce a significant alteration in the output for a small change in the input. The strength of the diffusion layer is usually assessed by its branch number~\cite{JDA_Thesis_1995} and the optimal branch number is achieved by the use of MDS matrices. An application is in the MixColumns operation of AES~\cite{AES}. Furthermore, besides block ciphers, MDS matrices are also widely used in many other cryptographic primitives, such as hash functions (Maelstrom~\cite{MAELSTROM}, Gr$\phi$stl~\cite{GROSTL}, PHOTON~\cite{PHOTON}) and stream ciphers (MUGI~\cite{MUGI}). As a result, the effectiveness of MDS matrices in diffusion layers is widely recognized, and several techniques have been proposed for designing MDS matrices.

In general, there are three techniques for generating MDS matrices.  The first method, known as direct construction, utilizes algebraic methods to generate an MDS matrix of any order without the need for any searching. The second one is the search-based construction. The third method integrates the aforementioned strategies into a hybrid approach. Direct constructions are primarily obtained from Cauchy and Vandermonde matrices, as well as their generalizations. Examples from the literature include references~\cite{Gupta2023direct, kc2, LACAN2003, Roth1989, sdm}.

While direct construction methods provide the feasibility of obtaining MDS matrices of any order, there is no guarantee of achieving a matrix with the optimal hardware area. This holds even for smaller sizes. The second technique, known as the search-based technique, is currently the only known method that can provide an optimal MDS matrix in terms of area. However, this approach is feasible only when the matrix size is small and the field size is not too large. In the context of search techniques, various matrix structures, including circulant, left-circulant, Hadamard, and Toeplitz matrices, are used. Significant work has been done in this direction, as demonstrated in \cite{GR15,CYCLICM,psa,XORM,skf}.

In the third method, referred to as the hybrid approach, a representative MDS matrix is typically obtained through a search method. Subsequently, this representative matrix is utilized to generate multiple MDS matrices. We will now briefly discuss some previous research in this area and discuss the inspiration behind this paper. In \cite{psa}, the authors proposed a hybrid technique for efficiently generating $2^n\times 2^n$ MDS matrices. Their approach introduces a new matrix form called GHadamard, which can be utilized to generate new (involutory) MDS matrices from a Hadamard (involutory) MDS matrix as a representative matrix. Additionally, in \cite{Saka}, a technique was introduced to generate new $n\times n$ MDS matrices that are isomorphic to existing ones. In a recent study \cite{Tuncay23}, a hybrid method is proposed to generate all $4 \times 4$
involutory MDS matrices over $\mathbb{F}_{2^m}$.

To the best of our knowledge, there is currently no efficient method for generating all $n\times n$ MDS and involutory MDS matrices over $\mathbb{F}_{p^m}$. This serves as our motivation to delve into the theoretical aspects of generating all $n \times n$ MDS matrices, as well as involutory MDS matrices over $\mathbb{F}_{p^m}$, and to propose hybrid strategies for achieving this.

We introduce two algorithms designed to generate all $n \times n$ MDS and involutory MDS matrices over $\mathbb{F}_{p^m}$, respectively. Specifically, to generate all MDS matrices over $\mathbb{F}_{p^m}$ of order $n \times n$, we construct representative MDS matrices of order $n$ and demonstrate that they can be obtained by searching through all $(n-1) \times (n-1)$ MDS matrices over $\mathbb{F}_{p^m}$. Compared to the exhaustive search, which necessitates examining all $n \times n$ matrices, our approach is considerably more effective. Additionally, to generate all involutory MDS matrices of order $n \times n$ over $\mathbb{F}_{p^m}$, we provide a necessary and sufficient condition for the representative MDS matrices to ensure that the resulting MDS matrix is always involutory.

In \cite{gmt}, the authors present an explicit formula to enumerate all $3\times3$ involutory MDS matrices over $\mathbb{F}_{2^m}$. However, a precise formula to compute the number of all $3\times3$ MDS matrices over a given finite field $\mathbb{F}_{2^m}$ is not available in the literature. In this paper, we establish that the count of all $3\times 3$ MDS matrices over $\mathbb{F}_{2^m}$ is given by $(2^m-1)^5(2^m-2)(2^m-3)(2^{2m}-9\cdot 2^m+21)$. Moreover, we provide the enumeration of all $4\times 4$ MDS and involutory MDS matrices over $\mathbb{F}_{2^m}$ for $m=2,3,4$. We also deduce the necessary and sufficient conditions on the representative matrix for generating all $n \times n$ MDS and involutory MDS matrices when $n=2,3,4$.

The paper is organized as follows. The mathematical background and notations used in the work are briefly discussed in Section~\ref{Sec:Definiotion} of the article. Section~\ref{Sec:proposed_method} provides the theory underlying the generation of all MDS and involutory MDS matrices over $\mathbb{F}_{p^m}$. In Section~\ref{Sec:Counting_3_MDS}, we prove the formula for counting all MDS matrices of order $3$, and in Section~\ref{Sec:Counting_4_MDS}, we present the count of all $4\times 4$ MDS and involutory MDS matrices over certain finite fields of characteristics $2$. Finally, we conclude in Section~\ref{Sec:Conclusion}.

\section{Mathematical preliminaries}\label{Sec:Definiotion}
In this section, we discuss some definitions and mathematical preliminaries that are important in our context. Let $\mathbb{F}_{p^m}$ be a finite field of order $p^m$, where $p$ is a prime and $m$ is a positive integer. Let $\mathbb{F}_{p^m}^*$ denote the multiplicative group of the finite field $\mathbb{F}_{p^m}$. The set of vectors of length $n$ with entries from the finite field $\mathbb{F}_{p^m}$ is denoted by $\mathbb{F}_{p^m}^n$.

Let $M$ be any $n\times n$ matrix over  $\mathbb{F}_{p^m}$  and let $|M|$ denote the determinant of  $M$. A matrix $D$ of order $n$ is said to be diagonal if $(D)_{i,j}=0$ for $i\neq j$. Using the notation $d_i = (D)_{i,i}$, the diagonal matrix $D$ can be represented as $\Diag(d_1, d_2, \ldots, d_n)$. It is evident that the determinant of $D$ is given by $|D| = \prod_{i=1}^{n} d_i$. Therefore, the diagonal matrix $D$ is non-singular over $\mathbb{F}_{p^m}$ if and only if $d_i \neq 0$ for $1 \leq i \leq n$.

An MDS matrix finds practical applications as a diffusion layer in cryptographic primitives. The concept of the MDS matrix comes from coding theory, specifically from the realm of maximum distance separable (MDS) codes. An $[n, k, d]$ code is MDS if it meets the singleton bound $d = n-k + 1$.

\vspace{4pt}
\begin{theorem}~\cite[page 321]{FJ77}
An $[n, k, d]$ code $C$ with generator matrix $G = [ I ~|~ M ]$, where $M$ is a $k \times ( n - k )$ matrix, is MDS if and only if every square sub-matrix (formed from any $i$ rows and any $i$ columns, for any $i = 1, 2,\ldots, min \{k, n - k \}$) of $M$ is non-singular.
\end{theorem}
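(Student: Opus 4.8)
The plan is to prove this classical characterization of MDS codes in terms of the non-singularity of all square submatrices of $M$, where $G = [I \mid M]$. The key tool is the correspondence between minimum distance and the existence of low-weight codewords, together with a dimension/rank argument.

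First I would recall the standard fact that for a code $C$ of length $n$ and dimension $k$, the minimum distance is $d$ if and only if every $d-1$ columns of a parity-check matrix $H$ are linearly independent, but some $d$ columns are dependent. Since $G = [I \mid M]$, a parity-check matrix is $H = [-M^{T} \mid I_{n-k}]$. However, I think the cleaner route here is to work directly with $G$ and codewords. The code is MDS iff $d = n-k+1$, which (by the Singleton bound being an equality) is equivalent to: no nonzero codeword has weight $\le n-k$. I would translate ``weight $\le n-k$'' into a statement about submatrices of $M$.

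So the core argument goes as follows. Write a codeword as $x G = (x \mid xM)$ for $x \in \F_{p^m}^k$. For a nonzero $x$, the left block $x$ already contributes some nonzero coordinates; say $x$ has exactly $k - i$ zero coordinates for some $i$ with $1 \le i \le k$ (so $x$ is supported on $i$ positions). Then the weight of $xG$ is $i + \mathrm{wt}(xM)$. The codeword has weight $\le n-k$ iff $\mathrm{wt}(xM) \le n-k-i$, i.e., iff $xM$ has at least $(n-k) - (n-k-i) = i$ zero coordinates. Now $xM$ depends only on the $i$ nonzero entries of $x$ and the corresponding $i$ rows of $M$; asking that $xM$ vanish on some set of $i$ columns is exactly asking that a certain $i \times i$ submatrix of $M$ (those $i$ rows, those $i$ columns) be singular — because a nontrivial linear combination of its rows is zero. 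Conversely, any singular $i \times i$ submatrix yields, by padding the coefficient vector with zeros, a nonzero $x$ with $\mathrm{wt}(xG) \le i + (n-k-i) = n-k$. Thus: there exists a nonzero codeword of weight $\le n-k$ $\iff$ some square submatrix of $M$ is singular. Negating both sides gives the theorem, after checking that the relevant range of $i$ is $1 \le i \le \min\{k, n-k\}$ (if $i > n-k$ then $xM$ cannot have $i$ zero coordinates among its $n-k$ coordinates, so no obstruction arises; and $i \le k$ trivially).

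The main obstacle, and the step requiring the most care, is the bookkeeping in the equivalence ``$xM$ vanishes on $i$ specified columns $\iff$ the corresponding $i \times i$ submatrix of $M$ is singular'': one must be careful that $x$ is supported on exactly (not just at most) $i$ positions so that the coefficient vector of the row combination is genuinely nonzero on all $i$ rows, and one must handle the boundary cases where $i$ exceeds $n-k$ correctly so that the index $i$ ranges precisely over $1, 2, \ldots, \min\{k, n-k\}$ as stated. Since this is a well-known result from \cite{FJ77}, I would keep the exposition brief, citing it and sketching only this rank argument.
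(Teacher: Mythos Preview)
Your argument is correct and is essentially the standard proof found in MacWilliams--Sloane. However, note that the paper does not give its own proof of this theorem: it is stated as a cited result from \cite{FJ77} and left unproved, serving only as background for the definition of MDS matrices. So there is no ``paper's proof'' to compare against; your sketch simply supplies the classical justification that the authors chose to omit.

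One minor point of care in your converse step: when an $i\times i$ submatrix is singular, the nontrivial row dependency you obtain may be supported on strictly fewer than $i$ rows, so the padded vector $x$ has weight $j\le i$ rather than exactly $i$. This does not hurt the bound, since then $\mathrm{wt}(xG)\le j+(n-k-i)\le n-k$ still holds; but your phrasing ``$x$ is supported on exactly $i$ positions'' in the forward direction and the warning about it in the final paragraph do not quite match how the converse actually works. It would be cleaner to say in the converse that $x$ is nonzero with support contained in the chosen $i$ rows, which is all you need.
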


\begin{definition}
A matrix $M$ of order $n$ is said to be an MDS matrix if $[I~|~M]$ is a generator matrix of a $[2n,n]$ MDS code.
\end{definition}

Another way to define an MDS matrix is as follows:

\begin{fact}
A square matrix $M$ is an MDS matrix if and only if every square sub-matrix of $M$ is non-singular.
\end{fact}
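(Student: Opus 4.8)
The plan is to obtain the Fact as an immediate consequence of the Definition above together with the Theorem of \cite{FJ77} quoted earlier, simply by identifying the code parameters. First I would restate the quoted Theorem with a generic length $N$ and dimension $K$: if $G=[I\mid M]$ generates an $[N,K,d]$ code with $M$ the associated $K\times(N-K)$ block, then this code is MDS if and only if every $i\times i$ sub-matrix of $M$ (formed from any $i$ rows and any $i$ columns) is non-singular for all $i\in\{1,\dots,\min\{K,N-K\}\}$.

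Next I would specialize to the situation of the Definition. By the Definition, an $n\times n$ matrix $M$ is an MDS matrix exactly when $[I\mid M]$ is a generator matrix of a $[2n,n]$ MDS code, i.e.\ when $N=2n$ and $K=n$. With these values $M$ is $n\times n$, which is consistent with the hypothesis, and $\min\{K,N-K\}=\min\{n,n\}=n$. Substituting into the Theorem, $M$ is an MDS matrix if and only if every $i\times i$ sub-matrix of $M$ is non-singular for every $i\in\{1,\dots,n\}$. Since for $i<n$ these are precisely the proper square sub-matrices of $M$, while for $i=n$ the only such sub-matrix is $M$ itself, this is exactly the requirement that every square sub-matrix of $M$ be non-singular, which is the Fact.

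The argument is purely a matter of matching parameters, so I do not anticipate any genuine obstacle. The one point worth checking explicitly is that the index range $1\le i\le\min\{K,N-K\}$ in the Theorem collapses to $1\le i\le n$ in the square case $[2n,n]$, and that the endpoint $i=n$ correctly encodes the non-singularity of $M$ itself (so that, in particular, an MDS matrix is always invertible); one may also note in passing that a $[2n,n]$ code is MDS precisely when its minimum distance attains the Singleton bound $d=n+1$, which is the sense in which ``MDS'' is used throughout.
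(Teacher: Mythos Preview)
Your argument is correct: specializing the quoted theorem of \cite{FJ77} to the parameters $N=2n$, $K=n$ of the Definition immediately yields the Fact, and you have checked the only nontrivial point, namely that $\min\{K,N-K\}=n$ so the range $1\le i\le n$ covers all square sub-matrices including $M$ itself.

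The paper, however, does not give a proof of this Fact at all: it simply records it as an alternative characterization following the Definition, with no accompanying argument. So your proposal supplies a justification that the paper omits, and it is exactly the one the surrounding material (the quoted Theorem and the Definition) is set up to suggest.
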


 One of the elementary row operations on matrices is multiplying a row of a matrix by a non-zero scalar. MDS property remains invariant under such operations. Thus, we have the following result regarding MDS matrices.

\begin{lemma}\cite{kcz}\label{Lemma_DMD_MDS}
Let $M$ be an MDS matrix, then for any non-singular diagonal matrices $D_1$ and $D_2$, $D_1MD_2$ will also be an MDS matrix.
\end{lemma}

 Using involutory diffusion matrices is more beneficial for implementation since it allows the same module to be utilized in both encryption and decryption phases.

\begin{definition}
An involutory matrix is defined as a square matrix $M$ that fulfills the condition $M^2 = I$ or, equivalently, $M = M^{-1}$.
\end{definition}

Now, we recall the concept of quadratic residues \cite{lidl} in a finite field.  A non-zero element $s\in \mathbb{F}_{p^m}$ is said to be a quadratic residue in $\mathbb{F}_{p^m}$ if it is a square of a non-zero element in $\mathbb{F}_{p^m}$. In other words, if there exists some $S\in \mathbb{F}_{p^m}^*$ such that $S^2= s$, then $s$ is a quadratic residue in $\mathbb{F}_{p^m}$. We denote the set of all quadratic residues in $\mathbb{F}_{p^m}$ with $QR(p^m)$. It is known that in a finite field of characteristic $2$, every non-zero element is a quadratic residue \cite{lidl}. Furthermore, if $p$ is an odd prime, then it is known that exactly $(p^m-1)/2$ elements in $\mathbb{F}_{p^m}$ are quadratic residues.\\

\section{The proposed methods}\label{Sec:proposed_method}
In this section, we introduce a technique for generating all $n \times n$ MDS and involutory MDS matrices over $\mathbb{F}_{p^m}$. The proposed approach is a hybrid construction that combines search-based and direct construction techniques. The fundamental idea behind the suggested approach is first to identify representative MDS matrices of order $n$ using the search-based method and then to acquire all $n \times n$ MDS and involutory MDS matrices by multiplying two non-singular diagonal matrices (pre and post) with these representative MDS matrices of order $n$. To find all representative MDS matrices of order $n$ over $\mathbb{F}_{p^m}^*$, we define the representative matrix form $M_1$ as follows:
\begin{equation}\label{1}
 M_1=\begin{pmatrix}
1&1&\ldots&1\\
1& & &\\
\vdots & &R&\\
1&  & &
\end{pmatrix},
\end{equation}
where $R$ is a $(n-1) \times (n-1)$ matrix over $\mathbb{F}_{p^m}\setminus\{0,1\}$.

It is evident that the matrix form $M_1$ provided above for obtaining representative MDS matrices of order $n$ is constructed from the $(n-1) \times (n-1)$ MDS matrices $R$ over $\mathbb{F}_{p^m} \setminus \set{0,1}$. Consequently, the search space for finding the representative MDS matrices $M_1$ of order $n$ is drastically reduced as compared to the exhaustive search of finding all $n \times n$ MDS matrices over $\mathbb{F}_{p^m}$.

In the following theorem, we demonstrate that a matrix $M$ over $\mathbb{F}_{p^m}^*$ can be uniquely represented as $M=D_1M_1D_2$, where $D_1$ and $D_2$ are some non-singular diagonal matrices, and $M_1$ is a matrix of the form shown in (\ref{1}).

 \begin{theorem}\label{thm:1}
 Let $M=(a_{i,j})$ be a $n \times n$ matrix over $\mathbb{F}_{p^m}^*$. Then, there are unique $n\times n$ matrices $D_1, D_2,$ and $M_1$ over $\mathbb{F}_{p^m}^*$ such that $$M=D_1M_1D_2,$$ where $D_1$ and $ D_2$ are diagonal matrices, $D_2$'s first entry is $1$ and $M_1$ is the representative matrix.
\end{theorem}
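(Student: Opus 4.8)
The plan is to convert the matrix identity $M=D_1M_1D_2$ into a system of scalar equations, observe that these equations \emph{force} every diagonal entry of $D_1$ and $D_2$ and every entry of $M_1$ once the normalization on $D_2$ is imposed, and then verify that this forced candidate actually has all the required properties. Concretely, I would write $D_1=\Diag(x_1,\dots,x_n)$ and $D_2=\Diag(y_1,\dots,y_n)$ with $y_1=1$; since $(D_1M_1D_2)_{i,j}=x_i\,(M_1)_{i,j}\,y_j$, the equation $M=D_1M_1D_2$ becomes
\[
 a_{i,j}=x_i\,(M_1)_{i,j}\,y_j\qquad(1\le i,j\le n),
\]
and the requirement that $M_1$ have the representative form~(\ref{1}) is exactly the requirement that $(M_1)_{1,j}=1$ for all $j$ and $(M_1)_{i,1}=1$ for all $i$.

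For existence I would read the solution straight off this system. Setting $i=1$ yields $a_{1,j}=x_1 y_j$, so $y_1=1$ gives $x_1=a_{1,1}$ and then $y_j=a_{1,j}/a_{1,1}$; setting $j=1$ yields $a_{i,1}=x_i$; and the remaining equations give $(M_1)_{i,j}=a_{i,j}a_{1,1}/(a_{i,1}a_{1,j})$. The key point to check — and essentially the only place the hypothesis that every entry of $M$ is nonzero is used — is that all these quotients are well-defined nonzero elements of $\mathbb{F}_{p^m}$, so that $D_1,D_2$ come out nonsingular and $M_1$ is zero-free. After that it remains only to substitute back: one sees $(M_1)_{1,j}=1$, $(M_1)_{i,1}=1$ and $y_1=1$ directly from the formulas, and $D_1M_1D_2=M$ by construction.

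For uniqueness I would note that any triple satisfying the statement must obey the same scalar system, and the deduction above shows that this system, \emph{together with} $y_1=1$, has a unique solution: the $(1,1)$-entry fixes $x_1$, the first row then fixes all the $y_j$, the first column fixes all the $x_i$, and finally every $(M_1)_{i,j}$ is determined. Put differently, for a fixed $M$ the representative factorizations form the one-parameter family $\bigl\{(c^{-1}D_1,\,M_1,\,cD_2):c\in\mathbb{F}_{p^m}^*\bigr\}$, and the condition ``$D_2$'s first entry is $1$'' picks out exactly one of them.

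Honestly there is no deep obstacle here; the argument is bookkeeping, and the only things demanding care are keeping track of the normalization $y_1=1$ (which is what breaks the scaling ambiguity) and confirming that the constructed $M_1$ simultaneously has all-ones first row, all-ones first column, and no zero entries. It is worth emphasizing that the MDS hypothesis is never used: the decomposition $M=D_1M_1D_2$ holds for every matrix with nonzero entries, which is precisely what later lets the search for $n\times n$ MDS matrices be reduced to studying the $(n-1)\times(n-1)$ block $R$ in~(\ref{1}).
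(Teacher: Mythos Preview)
Your proof is correct and follows essentially the same approach as the paper: both derive $D_1=\Diag(a_{1,1},\dots,a_{n,1})$, $D_2=\Diag(1,a_{1,1}^{-1}a_{1,2},\dots,a_{1,1}^{-1}a_{1,n})$, and $(M_1)_{i,j}=a_{1,1}a_{i,j}/(a_{i,1}a_{1,j})$, then verify the representative form and uniqueness by entrywise comparison. Your presentation is slightly more streamlined in that you \emph{derive} the candidate from the scalar system rather than positing it and checking, which makes existence and uniqueness fall out together; the paper instead constructs the triple first and then separately compares two hypothetical triples for uniqueness, but the content is identical.
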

\begin{proof}
Let us choose  $D_1=\Diag(a_{1,1},a_{2,1},\ldots,a_{n,1})$ and $D_2=\Diag(1,a_{1,1}^{-1}$ $a_{1,2},\ldots,\\a_{1,1}^{-1}a_{1,n})$.
Since they have entries in $\mathbb{F}_{p^m}^*$, $D_1$ and $D_2$ are non-singular. We consider $M_1=D_1^{-1}MD_2^{-1}$. It is therefore simple to see that $M=D_1M_1D_2$. It must be demonstrated that $M_1$ has the desired form as stated in (\ref{1}).
Let $M_1=(c_{i,j})$, where $c_{i,j}\in \mathbb{F}_{p^m}^*$. Then, we note that
\[c_{i,j}=a_{i,1}^{-1}a_{i,j}(a_{1,1}^{-1}a_{1,j})^{-1}=a_{i,1}^{-1}a_{i,j}a_{1,1}a_{1,j}^{-1}.\]
Furthermore, we can observe that
\begin{equation}\label{eq:1.5}
c_{i,1}=c_{1,i}=1,\ \text{for all}\  i=1,2,\ldots,n.
\end{equation}
Therefore, $M_1$ has the desired form.

Next, we prove the uniqueness part. Let there be two triplets $(D_1,M_1,D_2)$ and $(D_1',M_1',D_2')$ for the same $M$ that meet the theorem's requirements. Let $D_1=\Diag(\lambda_1,\lambda_2,\\ \ldots,\lambda_n)$, $D_2=\Diag(1,\theta_2,\ldots,\theta_n)$ and
$D_1'=\Diag(\lambda_1',\lambda_2',
\ldots,\lambda_n')$, $D_2'=\Diag(1,\theta_2',\ldots,\theta_n')$ for $\lambda_i, \theta_i, \lambda_i', \theta_i'\in \mathbb{F}_{p^m}^*$ for each $i=1, 2, \ldots, n$.
Also, let \smallskip
\[M_1=\begin{pmatrix}
1 &1 &\ldots &1\\
1 &r_{2,2} &\ldots &r_{2,n}\\
1 &\ldots &\ddots &\ldots\\
1 &r_{n,2} &\ldots &r_{n,n}
\end{pmatrix}\ \ \text{and}\ \
M'_1=\begin{pmatrix}
1 &1 &\ldots &1\\
1 &r'_{2,2} &\ldots &r'_{2,n}\\
1 &\ldots &\ddots &\ldots\\
1 &r'_{n,2} &\ldots &r'_{n,n}
\end{pmatrix}.\]
Since $D_1M_1D_2=D'_1M'_1D'_2$, we note that 
$$\begin{pmatrix}	\lambda_1 &\lambda_1\theta_2 &\ldots&\lambda_1\theta_n\\
\lambda_2 &\lambda_2\theta_2r_{2,2} &\ldots &\lambda_2\theta_nr_{2,n}\\
\ldots&\ldots &\ddots &\ldots\\
\lambda_n &\lambda_n\theta_2r_{n,2} &\ldots &\lambda_n\theta_nr_{n,n}
\end{pmatrix}=\begin{pmatrix}
\lambda_1' &\lambda'_1\theta'_2 &\ldots&\lambda'_1\theta'_n\\
\lambda'_2 &\lambda'_2\theta'_2r'_{2,2} &\ldots &\lambda'_2\theta'_nr'_{2,n}\\
\ldots&\ldots &\ddots &\ldots\\
\lambda'_n &\lambda'_n\theta'_2r'_{n,2} &\ldots &\lambda'_n\theta'_nr'_{n,n}\end{pmatrix}.
$$
A direct comparison of the elements of the matrices above reveals that $$D_1=D'_1,\ D_2=D'_2\ \text{and}\ M_1=M'_1.$$
So, uniqueness follows. The proof is now complete.
\end{proof}

Let's define the set $\mathcal{S}_n(\mathbb{F}_{p^m})$  as the collection of triplets of matrices $(D_1, D_2, M_1)$ satisfying the conditions given out in the Theorem \ref{thm:1}. The set of all $n\times n$ matrices over $\mathbb{F}_{p^m}^*$ is defined as the set $\mathcal{T}_n(\mathbb{F}_{p^m})$. Theorem \ref{thm:1} associates every element of $\mathcal{T}_n(\mathbb{F}_{p^m})$ to a triplet of type $(D_1, D_2, M_1)$, where $D_1$ and $D_2$ are diagonal matrices and $M_1$ is a representative matrix. It also asserts that all such triplets are distinct. These facts allow us to easily conclude that the map $(D_1,D_2,M_1) \mapsto M$ puts the sets $\mathcal{S}_n(\mathbb{F}_{p^m})$  and $\mathcal{T}_n(\mathbb{F}_{p^m})$  in one-to-one correspondence. Let's refer to this correspondence as $\Phi$.

\begin{remark}
It is worth mentioning that Gupta et al. in~\cite[Remark~3]{kc2} presented a notion of maximizing the occurrences of $1$'s in an MDS matrix, and the obtained MDS matrix with $2n-1$ occurrences of $1$ is of the same form as $M_1$ as shown in (\ref{1}). In this paper, we have chosen $M_1$ as the representative matrix form to generate all $n\times n$ MDS and involutory MDS matrices. Also, we demonstrate that for a matrix $M$, there is unique tuple $(D_1,D_2,M_1)$ such that $\Phi(D_1,D_2,M_1)=M$.
\end{remark}

\begin{remark}\label{Remark:M1_not_involutory}
It is important to note that the matrices of the form $M_1$ as shown in (\ref{1}) can never be involutory~\cite[Remark~3]{kc2}. Therefore, we consider it a representative of a class of matrices and investigate the conditions under which $M_1$ yields an involutory matrix $\Phi(D_1, D_2, M_1)$.
\end{remark}

The conditions under which the matrix $\Phi(D_1, D_2, M_1)$ will be an involutory matrix over $\mathbb{F}_{p^m}^*$ are discussed in the following theorem. It also discusses the particular forms of $D_1$ and $D_2$ when $\Phi(D_1, D_2, M_1)$ is involutory.

\begin{theorem}\label{thm:2}
Let $M_1=(c_{i,j})$ be a representative matrix of order $n$ and $M_2=(d_{i,j})$ be its inverse. Then $\Phi(D_1,D_2,M_1)$ will be an involutory matrix if and only if $\exists~ \alpha_i \in \mathbb{F}_{p^m}^{*}$ such that
\begin{center}
$d_{i,j}=\alpha_i \alpha_j c_{i,j}$, $1\leq i,j\leq n$.
\end{center}
Moreover, when $\Phi(D_1,D_2,M_1)$ is involutory, $D_1$ and $D_2$ must take the following form:
\begin{center}
$D_1=\Diag(\alpha_1,\lambda_2,\lambda_3,\ldots,\lambda_n)$ and $D_2=\Diag(1,\frac{\alpha_2}{\lambda_2},\frac{\alpha_3}{\lambda_3},\ldots,\frac{\alpha_n}{\lambda_n})$,
\end{center}
where $\lambda_i \in \mathbb{F}_{p^m}^{*}$ for $i=2,3,\ldots,n$.
\end{theorem}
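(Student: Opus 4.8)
The plan is to turn the involution condition $M^2 = I$ into an entrywise identity by exploiting the explicit shape of $M = \Phi(D_1,D_2,M_1) = D_1 M_1 D_2$ together with the fact that inversion reverses this product. Write $D_1 = \Diag(\lambda_1,\ldots,\lambda_n)$ and $D_2 = \Diag(1,\theta_2,\ldots,\theta_n)$, with $\theta_1 = 1$ forced by Theorem~\ref{thm:1}. Since $M_1$ is invertible (its inverse $M_2$ is given) and $D_1,D_2$ are non-singular, $M$ is invertible with $M^{-1} = D_2^{-1} M_1^{-1} D_1^{-1} = D_2^{-1} M_2 D_1^{-1}$, so $M$ is involutory if and only if $M = M^{-1}$. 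Comparing the $(i,j)$ entries of $D_1 M_1 D_2$ and $D_2^{-1} M_2 D_1^{-1}$ yields
\[
\lambda_i \theta_j \, c_{i,j} = \theta_i^{-1}\lambda_j^{-1} \, d_{i,j}
\quad\Longleftrightarrow\quad
d_{i,j} = (\lambda_i\theta_i)(\lambda_j\theta_j)\, c_{i,j}, \qquad 1 \le i,j \le n .
\]
This single equivalence is the crux of the theorem.

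For the ``only if'' direction, assume $M$ is involutory and set $\alpha_i := \lambda_i\theta_i \in \mathbb{F}_{p^m}^*$; the displayed identity becomes $d_{i,j} = \alpha_i\alpha_j c_{i,j}$, the asserted condition. Since $\theta_1 = 1$ we get $\alpha_1 = \lambda_1$ and $\theta_i = \alpha_i/\lambda_i$ for $i \ge 2$, giving exactly $D_1 = \Diag(\alpha_1,\lambda_2,\ldots,\lambda_n)$ and $D_2 = \Diag(1,\alpha_2/\lambda_2,\ldots,\alpha_n/\lambda_n)$. For the ``if'' direction, given $\alpha_i \in \mathbb{F}_{p^m}^*$ with $d_{i,j} = \alpha_i\alpha_j c_{i,j}$, pick $\lambda_1 := \alpha_1$, arbitrary $\lambda_2,\ldots,\lambda_n \in \mathbb{F}_{p^m}^*$, and $\theta_i := \alpha_i/\lambda_i$ (so $\theta_1 = 1$); then $\lambda_i\theta_i = \alpha_i$ for all $i$, the displayed identity holds, and reading the equivalence backwards shows $D_1 M_1 D_2 = (D_1 M_1 D_2)^{-1}$, i.e.\ $\Phi(D_1,D_2,M_1)$ is involutory.

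I do not expect a genuine obstacle; the points needing care are the normalization $\theta_1 = 1$ inherited from Theorem~\ref{thm:1}, which is what pins the first entry of $D_1$ to $\alpha_1$, and a brief remark that the scalars $\alpha_i$ appearing in the ``moreover'' clause may be taken to be those of the iff condition: they are determined by the ratios $d_{i,j}/c_{i,j}$ up to a single global sign (vacuous in characteristic $2$), since every $c_{i,j}$ is non-zero for a representative matrix. A full write-up would only add the routine entrywise evaluation of $D_1 M_1 D_2$ and $D_2^{-1} M_2 D_1^{-1}$ and the observation that invertibility of $M_1$ makes $M$ non-singular.
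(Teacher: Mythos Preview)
Your proposal is correct and follows essentially the same route as the paper: both reduce the involution condition to the entrywise identity $d_{i,j}=(\lambda_i\theta_i)(\lambda_j\theta_j)c_{i,j}$ (the paper via $M_2=(D_2D_1)M_1(D_2D_1)$ from $M^2=I$, you via $M=M^{-1}$), and then read off $\alpha_i=\lambda_i\theta_i$ together with the shape of $D_1,D_2$. Your version is in fact slightly cleaner, since you set $\alpha_i:=\lambda_i\theta_i$ directly rather than taking the paper's detour through the diagonal relation $d_{i,i}=\lambda_i^2\theta_i^2 c_{i,i}$ and extracting a square root.
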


\begin{proof}
Let $D_1=\Diag(\lambda_1,\lambda_2,\ldots,\lambda_n)$ and $D_2=\Diag(1,\theta_2,\ldots,\theta_n)$ be two non-singular matrices over $\mathbb{F}_{p^m}^{*}$. Suppose that $\Phi(D_1,D_2,M_1)=M$ is an involutory matrix. Therefore, we have
\begin{equation*}
\begin{aligned}
& M^2 = I \\
\implies & (D_1M_1D_2)(D_1M_1D_2) = I \\
\implies & M_1 (D_2D_1M_1D_2) = D_1^{-1} \\
\implies & M_1 (D_2D_1M_1D_2D_1) = I.
\end{aligned}
\end{equation*}
	
\noindent Now, since $M_2$ is the inverse of $M_1$, from above we can say that
\begin{center}
$M_2=D_2D_1M_1D_2D_1$.
\end{center}
	
\noindent Now since $D_2D_1=\Diag(\lambda_1,\lambda_2 \theta_2,\ldots,\lambda_n \theta_n)$, we have
\begin{equation}\label{Th2_Eqn_1}
\begin{aligned}
d_{i,j}= \lambda_i \theta_i c_{i,j}  \lambda_j \theta_j,
\end{aligned}
\end{equation}
for $1\leq i,j \leq n$ and $\theta_1=1$. Now if we compare the diagonal entries, we will have
\begin{equation}\label{Th2_Eqn_2}
\begin{aligned}
d_{i,i}= \lambda_i^2 \theta_i^2 c_{i,i},
\end{aligned}
\end{equation}
for $1\leq i \leq n$.

Now, let $\lambda_i^2 \theta_i^2=\alpha_i^2$ for some $\alpha_i \in \mathbb{F}_{p^m}^{*}$. Therefore, from   (\ref{Th2_Eqn_2}), we will have
\begin{equation}\label{3.5}
d_{i,i}= \alpha_i^2 c_{i,i}  \implies \alpha_i= \left( \frac{d_{i,i}}{c_{i,i}} \right)^{1/2}.
\end{equation}
	
Also, since $\lambda_i^2 \theta_i^2=\alpha_i^2$, we have $\lambda_i \theta_i =\alpha_i$. If $\lambda_i \theta_i =-\alpha_i$, then we will simply choose $\beta_i=-\alpha_i$ and proceed similarly. Therefore, since $\theta_1=1$, we have
	\begin{equation}\label{Th2_Eqn_3}
\begin{aligned}
& \lambda_1=\alpha_1,~\theta_1=1,~\theta_2=\frac{\alpha_2}{\lambda_2},~\theta_3=\frac{\alpha_3}{\lambda_3},\ldots,~\theta_n=\frac{\alpha_n}{\lambda_n}.
\end{aligned}
\end{equation}
	
\noindent It is evident from  (\ref{Th2_Eqn_3}) that $D_1$ and $D_2$ take the desired form. Also, based on  (\ref{Th2_Eqn_1}), we will have
	\begin{equation*}
\begin{aligned}
d_{i,j} = \alpha_i \alpha_j c_{i,j},
\end{aligned}
\end{equation*}
	for $1 \leq i, j \leq n$.
	
For the converse part, suppose that the conditions $d_{i,j} = \alpha_i \alpha_j c_{i,j}$ hold for $i,j=1,2,\ldots,n$. Now we will show that $M=\Phi(D_1,D_2,M_1)$ is an involutory matrix, where
$D_1=\Diag(\splitatcommas{\alpha_1,\lambda_2,\lambda_3,\ldots,\lambda_n})$ and $D_2=\Diag(1,\frac{\alpha_2}{\lambda_2},\frac{\alpha_3}{\lambda_3},\ldots,\frac{\alpha_n}{\lambda_n})$.

If $M=(a_{i,j})$ then we have $a_{i,j}=\lambda_i \theta_j c_{i,j}$ for $1\leq i,j \leq n$. Therefore, we have
\begin{equation*}
\begin{aligned}
(M^2)_{i,j} &= \sum_{k=1}^n a_{i,k}a_{k,j} \\
&= \sum_{k=1}^n (\lambda_i \theta_k c_{i,k})(\lambda_k \theta_j c_{k,j})
= \sum_{k=1}^n \lambda_i \theta_j c_{i,k} \lambda_k \theta_k c_{k,j} \\
& =\sum_{k=1}^n \lambda_i \theta_jc_{i,k} \alpha_k c_{k,j} ~\hspace*{2em} [\text{since}~ \lambda_i\theta_i= \alpha_i]\\
&= \sum_{k=1}^n \lambda_i\theta_jc_{i,k}\frac{d_{k,j}}{\alpha_j} ~\hspace*{2.7em} [\text{since}~d_{i,j} = \alpha_i \alpha_j c_{i,j}]\\
&=\frac{\lambda_i}{\lambda_j}\sum_{k=1}^n c_{i,k}d_{k,j} ~\hspace*{3.1em} [\text{since}~ \lambda_i\theta_i= \alpha_i]\\
&=
\begin{cases}
0 & i \neq j \\
1 & i = j \\
\end{cases} ~\hspace*{4.2em}  [\text{since} ~M_2=M_1^{-1}]
\end{aligned}
\end{equation*}
	
\noindent This demonstrates that $M^2=I$, implying that $M$ is an involutory matrix. This completes the proof.
\end{proof}

\begin{corollary}
Let $M=\Phi(D_1,D_2,M_1)$ be an involutory matrix. Then the following conditions apply to the entries of the matrices $M_1$ and $M_2$, as defined in the Theorem \ref{thm:2}:
\begin{enumerate}
\item $\frac{d_{i,i}}{c_{i,i}}\in QR(p^m)$ for  $i=1,2,3,\ldots,n$. In particular, $d_{1,1}\in QR(p^m)$.
\item $\frac{d_{i,j}}{c_{i,j}}=\frac{d_{j,i}}{c_{j,i}}$. In particular, $d_{i,1}=d_{1,i}$ for $i=1,2,\ldots,n$.
\end{enumerate}
\end{corollary}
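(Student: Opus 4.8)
The plan is to derive both statements directly from the characterization of involutory $\Phi(D_1,D_2,M_1)$ given in Theorem~\ref{thm:2}. That theorem tells us that since $M=\Phi(D_1,D_2,M_1)$ is involutory, there exist scalars $\alpha_1,\ldots,\alpha_n\in\mathbb{F}_{p^m}^{*}$ with $d_{i,j}=\alpha_i\alpha_j c_{i,j}$ for all $1\le i,j\le n$. Because $M_1$ is a representative matrix over $\mathbb{F}_{p^m}^{*}$, every entry $c_{i,j}$ is a nonzero element, so the quotients $d_{i,j}/c_{i,j}$ appearing in the statement are well defined, and in fact $d_{i,j}/c_{i,j}=\alpha_i\alpha_j$.

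For the first claim, I would specialize to $j=i$, obtaining $d_{i,i}/c_{i,i}=\alpha_i^{2}$. Since $\alpha_i\in\mathbb{F}_{p^m}^{*}$, the element $\alpha_i^{2}$ is the square of a nonzero element, hence lies in $QR(p^m)$ by definition; this gives $d_{i,i}/c_{i,i}\in QR(p^m)$ for every $i$. For the ``in particular'' part, recall from the shape of the representative matrix, namely equation~(\ref{eq:1.5}), that $c_{1,1}=1$, so $d_{1,1}=\alpha_1^{2}\in QR(p^m)$.

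For the second claim, the symmetry $d_{i,j}/c_{i,j}=\alpha_i\alpha_j=\alpha_j\alpha_i=d_{j,i}/c_{j,i}$ is immediate from commutativity of multiplication in $\mathbb{F}_{p^m}$. For the ``in particular'' part, take $j=1$ and use $c_{i,1}=c_{1,i}=1$ (again equation~(\ref{eq:1.5})) to conclude $d_{i,1}=\alpha_i\alpha_1=\alpha_1\alpha_i=d_{1,i}$ for $i=1,2,\ldots,n$.

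There is essentially no obstacle: the content is already contained in Theorem~\ref{thm:2}, and the only points deserving a word of care are (i) checking that the divisions are legitimate, which follows from $M_1$ having all entries in $\mathbb{F}_{p^m}^{*}$, and (ii) invoking the normalization $c_{1,i}=c_{i,1}=1$ built into the representative form $M_1$. Accordingly, I would present the argument in just a few lines.
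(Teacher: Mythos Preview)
Your proposal is correct and follows essentially the same approach as the paper: both derive part~1 from $d_{i,i}/c_{i,i}=\alpha_i^2$ (this is exactly equation~(\ref{3.5}) in the paper) and part~2 from the relation $d_{i,j}=\alpha_i\alpha_j c_{i,j}$ of Theorem~\ref{thm:2} together with $c_{i,1}=c_{1,i}=1$ from~(\ref{eq:1.5}). Your write-up is, if anything, slightly more explicit than the paper's in noting that the quotients are well defined.
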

\begin{proof}
According to (\ref{3.5}),  $\frac{d_{i,i}}{c_{i,i}}=\alpha_i^2$ for each $i=1, 2, \ldots, n$. This indicates that  $\frac{d_{i,i}}{c_{i,i}}\in QR(p^m)$ for  $i=1,2, \ldots,n$. Moreover, we conclude that $d_{1,1}\in QR(p^m)$ because $c_{1, 1}=1$. This completes the proof of part $1$. The proof of part $2$ easily follows from Theorem \ref{thm:2} and the fact that $c_{i, 1}=c_{1, i}=1$ for $i=1,2, \ldots,n$ (see (\ref{eq:1.5})).
\end{proof}

The next theorem determines the criteria on $R$ that must be met for the representative matrix $M_1$ to be an MDS matrix.

\begin{theorem}\label{thm:3}
Let $M_1$ be a representative MDS matrix of order $n$, then $R$ meets the requirements listed below:
\begin{enumerate}
\item $R$ is an MDS matrix.
\item For $i,j\in\{1,2,\ldots,n-1\}$, the modified matrix $R$ created by substituting the $i^{th}$ row or $j^{th}$ column or both of $R$ with all ones, is non-singular.
\item No entry of $R$ can be $1$.
\item A column or row of $R$ has distinct elements.
\item $R-U$ is non-singular, where $U$ is an $(n-1)\times (n-1)$ matrix with all entries $1$.
\end{enumerate}
\end{theorem}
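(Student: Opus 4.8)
The plan is to derive each of the five conditions by exploiting the fact that $M_1$ is MDS together with Fact~1 (every square submatrix of an MDS matrix is non-singular) and the structure of $M_1$: its first row and first column consist entirely of $1$'s, while $R$ sits in the bottom-right $(n-1)\times(n-1)$ block. The key idea throughout is that many submatrices of $M_1$ are, up to relabelling, either submatrices of $R$ or submatrices of $R$ with one or more rows/columns replaced by the all-ones vector coming from the border of $M_1$.

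First I would prove (1): any square submatrix of $R$ is a square submatrix of $M_1$ (since $R$ is literally a submatrix of $M_1$), hence non-singular, so $R$ is MDS by Fact~1. For (2), fix rows $i_1<\dots<i_k$ and columns $j_1<\dots<j_k$ from $\{1,\dots,n-1\}$ and consider the submatrix of $R$ with, say, its $i$-th row overwritten by ones. Observe this equals the submatrix of $M_1$ obtained by selecting row $1$ of $M_1$ (the all-ones top row) together with the rows $i_1+1,\dots$ (shifted by one to account for the border), and columns $j_1+1,\dots$; since $M_1$ is MDS, that submatrix is non-singular. The same trick handles a replaced column (use column $1$ of $M_1$) and both simultaneously (use both the first row and first column of $M_1$). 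For (3): if some entry $r_{i,j}=1$, then the $2\times2$ submatrix of $M_1$ formed from rows $\{1,i+1\}$ and columns $\{1,j+1\}$ is $\left(\begin{smallmatrix}1&1\\1&1\end{smallmatrix}\right)$, which is singular — contradiction. For (4): if two entries in a row (resp.\ column) of $R$ are equal, say $r_{i,j}=r_{i,k}$, then the $2\times2$ submatrix of $M_1$ on rows $\{1,i+1\}$ and columns $\{j+1,k+1\}$ is $\left(\begin{smallmatrix}1&1\\r_{i,j}&r_{i,k}\end{smallmatrix}\right)$, again singular — contradiction. Finally (5): consider the full $n\times n$ matrix $M_1$ and perform the elementary row operation of subtracting row $1$ from each of rows $2,\dots,n$; this does not change $|M_1|$, and the resulting matrix has first row all ones, a first column equal to $(1,0,\dots,0)^T$, and bottom-right block $R-U$. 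Expanding the determinant along the first column gives $|M_1|=|R-U|$. Since $M_1$ is MDS it is in particular non-singular, so $|R-U|\neq 0$.

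The steps are all short; the only point requiring care is the bookkeeping of index shifts in (1) and (2) — making precise that selecting a set of rows/columns of $R$, possibly with some replaced by ones, corresponds to a legitimate choice of the same number of rows and columns of $M_1$ (with the "ones" rows/columns being supplied by the first row/column of $M_1$, and with the constraint that one cannot select the same border row or column twice, which is why in (2) only a single row and/or a single column is replaced). I expect this indexing argument to be the main obstacle to writing cleanly, but conceptually it is routine. An alternative, cleaner phrasing: since the MDS property is invariant under row and column permutations, it suffices to note that replacing the $i$-th row of $R$ by ones and then adjoining it into the position of $M_1$'s first row realizes the submatrix in question as a genuine submatrix of $M_1$.
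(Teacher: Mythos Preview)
Your proposal is correct and follows essentially the same approach as the paper: each condition is verified by recognizing the relevant matrix (a submatrix of $R$, $R$ with a row/column overwritten by ones, a $2\times 2$ block, or $R-U$) as a genuine square submatrix of $M_1$, possibly after a row/column permutation or the row operation $R_i\leftarrow R_i-R_1$. The only cosmetic difference is that in part~(2) the paper treats just the full $(n-1)\times(n-1)$ modified $R$ (which is all the statement requires), whereas you sketch the more general $k\times k$ case; and in part~(5) your subtraction is in fact the cleaner formulation, since the paper's ``$R_i\leftarrow R_1+R_i$'' only matches the displayed block $R-U$ in characteristic~$2$.
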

\begin{proof}
We will demonstrate that the requirements $1$ to $5$ are true by assuming that $M_1$ is an MDS matrix. If $M_1$ is MDS, $R$ must also be MDS. So, the first condition is true.
For condition $2$, if we replace the $i^{th}$ row or $j^{th}$ column or both of $R$ by all ones, then the modified $R$ will be one  of the following forms:
\begin{enumerate}[a.]
\item Replacing  $i^{th}$ row of $R$ by all ones.
\item Replacing $j^{th}$ column of $R$ by all ones.
\item Applying both of the operations mentioned above in $a$ and $b$ simultaneously.
\end{enumerate}
In each of these instances, the modified $R$ is either a $(n-1) \times (n-1)$ sub-matrix of $M_1$, or it is reducible to some $(n-1) \times (n-1)$ sub-matrix of $M_1$. Since $M_1$ is MDS, the modified $R$ is therefore non-singular.

If any element of $R$ is $1$, then there exists a $2\times2$ sub-matrix $\begin{pmatrix}1&1\\1&1\end{pmatrix}$ of $M_1$. The determinant of this sub-matrix is zero. This is not possible because $M_1$ is MDS. Therefore, no entry of $R$ can be $1$. 

Next, if any row or column of $R$ has a repeated entry, a $2 \times 2$ sub-matrix of $M_1$ of the following form  $\begin{pmatrix}1&1\\x&x\end{pmatrix}\ \text{or}\ \begin{pmatrix}1&x\\1&x\end{pmatrix}$ must exist. The determinant of this sub-matrix is zero, which is impossible as $M_1$ is an MDS matrix. This establishes the fourth condition. At last, we demonstrate that the fifth condition is true. We transform $M_1$ into $M_1'$ by applying a series of row operations $R_i\leftarrow R_1+R_i, i=2,3,\ldots,n$, where
\begin{equation}\label{11}
 M_1^{'}=
\begin{pmatrix}
1&1&\ldots&1\\
0&&&\\
\vdots&&R-U&\\
0&&&
\end{pmatrix}.
\end{equation}
The determinant of $M_1'$ is $|R-U|$. Since $M_1$ is non-singular, so is $R-U$. Thus, condition $5$ holds and this completes the proof.
\end{proof}

The following theorem derives the sufficient conditions on $R$ that ensure that all minors of  $M_1$ of order $1$, $2$, $n-1$, and $n$ are non-zero.

\begin{theorem}\label{thm:4}
Let $M_1$ be a representative matrix of order $n$. If $R$ satisfies the conditions $1–5$ mentioned in Theorem $\ref{thm:3}$. Then, all minors of $M_1$ of order $1$, $2$, $n-1$, and $n$ are non-zero.
\end{theorem}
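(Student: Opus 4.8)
The plan is to exploit the block structure of $M_1$ — its first row and first column consist of ones, and removing them leaves exactly $R$ — and to check the four required orders one at a time, repeatedly using two elementary facts: permuting rows or columns of a matrix does not affect whether it is singular, and the operation $R_k \leftarrow R_k - R_1$ leaves the determinant unchanged. Orders $1$ and $n$ are immediate. Every $1\times 1$ submatrix of $M_1$ is either the scalar $1$ or an entry of $R$, and the entries of $R$ are non-zero since $R$ is MDS (condition $1$); the unique $n\times n$ minor is $|M_1|$, and applying $R_k \leftarrow R_k - R_1$ for $k=2,\dots,n$ brings $M_1$ to the block-triangular form in (\ref{11}) with bottom-right block $R-U$, so $|M_1| = |R-U| \neq 0$ by condition $5$.

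For the $2\times 2$ minors I would do a short case analysis on whether the two selected rows $i_1<i_2$ and two selected columns $j_1<j_2$ include the index $1$. If $i_1,j_1>1$ the submatrix lies inside $R$ and is non-singular by condition $1$. If $i_1=j_1=1$ the submatrix is $\left(\begin{smallmatrix}1&1\\1&r\end{smallmatrix}\right)$ for some entry $r$ of $R$, with determinant $r-1\neq 0$ by condition $3$. If exactly one of $i_1,j_1$ equals $1$, the determinant equals the difference of two entries lying in a common row (respectively column) of $R$, hence is non-zero by condition $4$. This settles all $2\times 2$ minors.

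The $(n-1)\times(n-1)$ minors are the core of the argument, and I expect the ``interior'' sub-case to be the main obstacle. Such a minor is obtained by deleting one row $i$ and one column $j$ from $M_1$. If $i=j=1$ the minor is $|R|\neq 0$ by condition $1$. If $i=1$ and $j\ge 2$, the resulting matrix is the all-ones column together with all columns of $R$ except column $j-1$; moving the all-ones column into position $j-1$ (a column permutation) identifies it with the matrix obtained from $R$ by overwriting its $(j-1)$-th column with ones, which is non-singular by condition $2$; the case $i\ge 2,\ j=1$ is symmetric. In the remaining case $i,j\ge 2$ the minor has the shape $\left(\begin{smallmatrix}1&\mathbf{1}^{T}\\\mathbf{1}&R'\end{smallmatrix}\right)$, where $R'$ is $R$ with row $i-1$ and column $j-1$ deleted; the key step is to observe that moving the all-ones row back into position $i-1$ and the all-ones column back into position $j-1$ turns this matrix into exactly the matrix obtained from $R$ by overwriting both its $(i-1)$-th row and its $(j-1)$-th column with ones, which is non-singular by condition $2$ (the ``both'' case). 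Collecting these cases gives non-vanishing of all $(n-1)\times(n-1)$ minors, and together with the orders $1,2,n$ already handled, this completes the proof; the only delicate point throughout is the bookkeeping of the row/column permutations that reduce each minor to one of the matrices whose non-singularity is guaranteed by conditions $1$–$5$.
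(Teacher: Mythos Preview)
Your proof is correct and follows essentially the same approach as the paper's own proof: both treat the orders $1,2,n-1,n$ separately, using conditions $1,3,4$ for orders $1$ and $2$, the row reduction to (\ref{11}) and condition $5$ for order $n$, and the observation that every $(n-1)\times(n-1)$ submatrix of $M_1$ is (up to a row/column permutation) either $R$ itself or $R$ with a row, a column, or both overwritten by ones, so that condition $2$ applies. Your explicit description of the permutations in the $(n-1)\times(n-1)$ case is slightly more detailed than the paper's phrase ``reducible to'', but the argument is the same.
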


\begin{proof}
Let $R=(r_{i,j})$. As $R$ is MDS, we note that no element of $R$ can be $0$. Furthermore, every element in the first row and first column of $M_1$ is $1$. As a result, none of the entries of $M_1$ are zero, i.e., all of its minors of order $1$ are non-zero. Further, any $2\times 2$ sub-matrix of $M_1$ is either a sub-matrix of $R$ or one of the following forms:
$$\begin{pmatrix}
1&1\\
r_{i,j}	&r_{i,j+1}
\end{pmatrix}, \  \begin{pmatrix}
1&1\\
1&r_{i,j}\\
\end{pmatrix}, \ \begin{pmatrix}
1&r_{i,j}\\
1&r_{i+1,j}
\end{pmatrix}.$$
If it is a sub-matrix of $R$, it is non-singular because $R$ is MDS. If it is one of the three sub-matrices stated above, its determinant will be one of $r_{i,j}-r_{i,j+1}, r_{i,j}-1$, and $r_{i,j}-r_{i+1,j}$, respectively. These three determinants are not zero since no element of $R$ is $1$ and each element in each row and column of $R$ is distinct.  This indicates that none of the minors of $M_1$ of order $2$ are zero. Next, we observe that each $(n-1)\times(n-1)$ sub-matrix of $M_1$ is either $R$ or the sub-matrix obtained by doing one of the following three operations:
\begin{enumerate}[a.]
\item Removing the $i^{th}$ row from $R$ to create a $(n-2)\times(n-1)$ matrix $R'$, then adding a row of all ones on top of $R'$.
\item Removing the $j^{th}$ column from $R$ to obtain a $(n-1)\times(n-2)$ matrix $R''$, then inserting a column of all ones at the leftmost position of $R''$.
\item Applying both of the operations mentioned above in $a$ and $b$ simultaneously.
\end{enumerate}
So, if the $(n-1)\times(n-1)$ sub-matrix of $M_1$  is $R$, then it is non-singular by condition $1$ of Theorem $\ref{thm:3}$. We will now discuss the cases in $a$, $b$, and $c$. It is possible to reduce the sub-matrix of the case $a$ to a matrix by replacing the $i^{th}$ row of $R$ with all ones. Similar to the first case, the sub-matrices in the other two cases are reducible to the matrices formed by replacing either the $j^{th}$ column or both the $i^{th}$ row and the $j^{th}$ column of $R$ with all ones. Condition $2$ of Theorem $\ref{thm:3}$ guarantees that none of these sub-matrices is singular for any of these three cases. Consequently, all minors of $M_1$ of order $n-1$ are non-zero.

Finally, after applying a sequence of row operations $R_i\gets R_1+R_i, i=2,3,\ldots,n$, we  transform  $M_1$ into $M_1'$, where $M_1'$ is same as in (\ref{11}). The determinant of $M_1$ is $|R-U|$. Since $R-U$ is non-singular by condition $5$ of Theorem $\ref{thm:3}$, hence, $M_1$ is also non-singular. Thus, we have proved that all minors of $M_1$ of order $1$, $2$, $n-1$ and $n$ are non-zero. This completes the proof.
\end{proof}

\begin{corollary}\label{cor:2}
The conditions of Theorem $\ref{thm:3}$ are sufficient for a representative matrix $M_1$ of order $2$, $3$, and $4$ to be an MDS matrix.
\end{corollary}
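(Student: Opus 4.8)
The plan is to derive this corollary immediately from Theorem~\ref{thm:4} together with the characterization that a square matrix is MDS if and only if every one of its square submatrices is non-singular, i.e.\ every minor of order $i$ is non-zero for each $i\in\{1,2,\ldots,n\}$. Theorem~\ref{thm:4} already supplies, under conditions $1$--$5$ of Theorem~\ref{thm:3}, the non-vanishing of all minors of $M_1$ of orders $1$, $2$, $n-1$, and $n$; so the only thing left to check is that, for the values of $n$ in question, these orders exhaust all possible orders.

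The key observation is purely combinatorial: for $n=2,3,4$ the set $\{1,2,n-1,n\}$ coincides with $\{1,2,\ldots,n\}$. Indeed, for $n=2$ it is $\{1,2\}$, for $n=3$ it is $\{1,2,3\}$, and for $n=4$ it is $\{1,2,3,4\}$. Equivalently, there is no integer strictly between $2$ and $n-1$ precisely when $n\le 4$, which is exactly the range covered by the corollary.

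Combining the two: fix $n\in\{2,3,4\}$ and suppose $R$ satisfies conditions $1$--$5$ of Theorem~\ref{thm:3}. By Theorem~\ref{thm:4}, all minors of $M_1$ of orders $1$, $2$, $n-1$, $n$ are non-zero; by the observation above, these are all the minors of $M_1$. Hence every square submatrix of $M_1$ is non-singular, so $M_1$ is an MDS matrix. One should state this cleanly for each of the three cases, or once in the unified form just described.

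I do not anticipate a genuine obstacle here, since every step is immediate. The only points that require a moment of care are verifying the set identity $\{1,2,n-1,n\}=\{1,\ldots,n\}$ for each of the three relevant $n$ (and observing that it fails for $n\ge 5$, explaining why the statement stops at $n=4$), and making sure Theorem~\ref{thm:4} is invoked with $M_1$ built from an $R$ that meets \emph{all five} conditions, not merely the MDS property of $R$ alone.
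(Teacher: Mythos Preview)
Your proposal is correct and follows essentially the same approach as the paper: invoke Theorem~\ref{thm:4} and observe that for $n\in\{2,3,4\}$ the orders $1,2,n-1,n$ exhaust all possible minor orders, so every square submatrix of $M_1$ is non-singular and $M_1$ is MDS. The paper merely spells out the three cases separately rather than using your unified set identity $\{1,2,n-1,n\}=\{1,\ldots,n\}$, but the argument is identical.
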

\begin{proof}
Let $M_1$ be a representative matrix of order $n$. For $n=2$, since $M_1$ only includes minors of sizes $1$ and $2$, Theorem $\ref{thm:4}$ implies that all minors of $M_1$ are non-zero. Next, let $n$ equal $3$. Consequently, Theorem $\ref{thm:4}$ suggests once more that every minor of $M_1$ is non-zero (In this case, $M_1$ includes minors of sizes $1$, $2$, and $3$). Theorem $\ref{thm:4}$ also concludes that all minors of $M_1$ are non-zero for $n=4$ ($M_1$ has minors of sizes $1$,$2$, $3$, and $4$).  In each of these cases, $M_1$ is an MDS matrix.
\end{proof}

\begin{remark}\label{Remark_iff_condition}
It is important to note that if $n\geq 5$, the conditions outlined in Theorem $\ref{thm:3}$ are not sufficient for $M_1$ to be an MDS matrix. For instance, when $n=5$, Theorem $\ref{thm:4}$ only guarantees that all minors of $M_1$ of order $1$, $2$, $4$, and $5$ are non-zero. However, it does not address the minors of order $3$. However, the conditions outlined in Theorem $\ref{thm:3}$ serve as both necessary and sufficient criteria on $R$ for a matrix $M_1$ of order $n$, where $2\leq n\leq 4$, to be an MDS matrix.
\end{remark}

Theorem $\ref{thm:2}$ also discusses the necessary and sufficient conditions on the matrices $D_1, D_2$, and $M_1$ for $M$ to be an involutory matrix, where $M=D_1M_1D_2$. Thus, the necessary and sufficient conditions for a matrix of order $n$ to be an involutory MDS matrix, where $2\leq n\leq 4$, are obtained by combining Theorems $\ref{thm:2}$, $\ref{thm:3}$, and  Corollary $\ref{cor:2}$.

Now, we present our algorithms for generating MDS and involutory MDS matrices over $\mathbb{F}_{p^m}$ using the results of Theorems \ref{thm:1}, \ref{thm:2}, \ref{thm:3} and \ref{thm:4} (as well as Corollaries).

Let $\mathcal{R}_n$ denote the collection of all $n\times n$ matrices that satisfy the five conditions outlined for $R$ in Theorem \ref{thm:3}. Let $\alpha$ be a fixed element of $\mathbb{F}_{p^m}^*$ and $M_n(\mathbb{F}_{p^m}^*)$ represents the set of all $n\times n$ matrices over $\mathbb{F}_{p^m}^*$.  Now, let us consider the following sets of diagonal matrices:
$$ \mathcal{D}_n^{(1)}=\{\Diag(\lambda_1,\lambda_2,\ldots,\lambda_n)\mid\lambda_1,\lambda_2,\ldots,\lambda_n\in \mathbb{F}_{p^m}^*\},$$  $$\mathcal{D}_n^{(2)}=\{\Diag(1,\theta_2,\ldots,\\ \theta_n)\mid\theta_2,\theta_2,\ldots,\theta_n\in \mathbb{F}_{p^m}^*\},$$ $$\ \mathcal{D}_{\alpha,n}=\{\Diag(\alpha,\lambda_2,\ldots,\lambda_n)\mid\lambda_2,\ldots,\lambda_n\in\mathbb{F}_{p^m}^*\}.$$

We utilize Theorems \ref{thm:1} and \ref{thm:3} to formulate our first algorithm (i.e., Algorithm~\ref{Revised_Algorithm_1}) for generating all $n \times n$ MDS matrices. Specifically, it follows the steps below:

\begin{enumerate}
\item First, we generate all representative MDS matrices of order $n$.
\item From these representative matrices, we generate all $n\times n$ MDS matrices.
\end{enumerate}

We employ Theorems $\ref{thm:1}$, $\ref{thm:2}$ and $\ref{thm:3}$  to formulate our second algorithm (i.e., Algorithm~\ref{Algorithm_2}) for generating all $n\times n$ involutory MDS matrices. We specifically follow the steps below:

\begin{enumerate}
\item First, we generate all representative MDS matrices of order $n$. We observe that the word representative refers to a class of MDS matrices. Since a representative matrix of an involutory MDS matrix is not an involutory MDS matrix, it is not utilized to designate a class of involutory MDS matrices.
\item We then determine whether or not the representative MDS matrix satisfies the involutory conditions of Theorem \ref{thm:2}. The best part of this strategy is that the representative itself can be tested. It is not necessary to obtain the exact matrix.
\item Finally, we established the requirements for $D_1$ and $D_2$ as in Theorem \ref{thm:2}.
\end{enumerate}

\begin{algorithm}[htpb!]
\caption{Printing all $n\times n$  MDS matrices over $\mathbb{F}_{p^m}$.}
\label{Revised_Algorithm_1}
{\bf Input:} Finite Field $\mathbb{F}_{p^m}$ and positive integer $n \geq 2$.\\
{\bf Output:} All MDS matrices of order $n\times n$.
\begin{algorithmic}[1]
\For{$R\in M_{n-1}(\mathbb{F}_{p^m}^*)$}
\If{$R \notin \mathcal{R}_{n-1}$}
\State { \bf continue}
\Else
\State $M_1=\begin{pmatrix} 1&\ldots&1\\ \vdots&R&\\ 1&& \end{pmatrix}$
\If{$n>4$ and $M_1$ is not MDS} \label{restriction_n_algo1}
\State {\bf continue}
\Else
\For{$D_1\in\mathcal{D}_n^{(1)}$ and $D_2\in\mathcal{D}_n^{(2)}$}
\State Print matrix $D_1M_1D_2$
\EndFor
\EndIf
\EndIf
\EndFor
\end{algorithmic}
\end{algorithm}
\setcounter{algorithm}{1}

\begin{remark}
The line \ref{restriction_n_algo1} of Algorithm~\ref{Revised_Algorithm_1} employs a filter to eliminate non-MDS representative matrices when $n>4$. According to Remark~\ref{Remark_iff_condition}, for $n\leq 4$, if $R\in \mathcal{R}_{n-1}$ then $M_1$ will be a representative MDS matrix. Hence, for $n\leq 4$, Line \ref{restriction_n_algo1} avoids filtering and provides a direct computational advantage over other cases.
\end{remark}

In the following example, we consider an MDS matrix from the literature ~\cite[Example~4]{kcz} and demonstrate its corresponding triplets of matrices $(D_1, D_2, M_1)$, which are utilized to produce the MDS matrix.

\begin{example}\label{exm:1}
Let $\beta$ be a root of the primitive polynomial $x^4+x+1$, which generates $\mathbb{F}_{2^4}$. Consider the following $4 \times 4$ MDS matrix $M$ over $\mathbb{F}_{2^4}$.

\[M=(a_{i,j})=\begin{pmatrix}
\beta^3 + \beta^2+ 1   & \beta^2 +\beta + 1 & \beta^3 + \beta & \beta+1 \\
\beta^2 +\beta + 1  & \beta^3 + \beta^2+ 1  & \beta + 1  & \beta^3 + \beta \\
\beta^3 + \beta & \beta + 1  & \beta^3+\beta^2 + 1 & \beta^2+\beta+1\\
\beta + 1  & \beta^3 + \beta   & \beta^2 +\beta+1 & \beta^3+\beta^2 + 1
\end{pmatrix}.\]
Using Theorem \ref{thm:1}, $M$ can be written as $$M=D_1M_1D_2,$$ where  $D_1=\Diag(\lambda_1, \lambda_2, \lambda_3, \lambda_4)$ and $D_2=\Diag(1, \theta_1, \theta_2, \theta_3)$ are unique non-singular diagonal matrices, and $M_1$ is a representative MDS matrix.

The representative MDS matrix $M_1$ of the class to which the MDS matrix $M$ belongs is represented as follows:
\[M_1=\begin{pmatrix}
1 & 1 & 1 & 1 \\
1 & \beta^3+\beta^2   & \beta^3+\beta^2+1   & \beta^2 + 1 \\
1 & \beta^3+\beta^2+1 & \beta^2+1           & \beta^2+\beta+1\\
1 & \beta^2+1         & \beta^2+\beta+1     & \beta^3
\end{pmatrix}.\]
We can find unique values of $D_1=\Diag(\lambda_1, \lambda_2, \lambda_3, \lambda_4)$ and $D_2=\Diag(1, \theta_2, \theta_3,  \theta_4)$ from Theorem \ref{thm:1}.
\begin{eqnarray*}	
&& ~\lambda_1=\beta^3+\beta^2+1, ~\lambda_2=\beta^2+\beta+1, ~\lambda_3=\beta^3 + \beta, ~\lambda_4=\beta + 1\\
&& ~\theta_2=\beta^3+\beta^2+\beta+1,~\theta_3=\beta^3+\beta^2+\beta,~\theta_4=\beta^3 + \beta^2.
\end{eqnarray*}	
The MDS matrix $M$ can easily be obtained by applying $D_1=\Diag(~\beta^3+\beta^2+1,~\beta^2+\beta+1,~\beta^3 + \beta,~\beta + 1)$ and $D_2=\Diag(~1,~\beta^3+\beta^2+\beta+1,~\beta^3+\beta^2+\beta,~\beta^3 + \beta^2)$ to representative MDS matrix $M_1$.
\end{example}

 \begin{algorithm}[htpb!]
\caption{Printing all $n\times n$ involutory MDS matrices over $\mathbb{F}_{p^m}$.}\label{Algorithm_2}
{\bf Input:} Finite Field $\mathbb{F}_{p^m}$ and positive integer $n \geq 2$.\\
{\bf Output:} All involutory MDS matrices of order $n\times n$.
\begin{algorithmic}[1]
\For{$R\in M_{n-1}(\mathbb{F}_{p^m}^*)$}
\If{$R \notin \mathcal{R}_{n-1}$} { \bf continue}
\Else
\State $M_1=\begin{pmatrix} 1&\ldots&1\\ \vdots&R&\\ 1&& \end{pmatrix}$
\If{$n>4$ and $M_1$ is not MDS} { \bf continue}
\Else
\State $M_2=(d_{i,j})\gets  M_1^{-1}$
\State{$flag\gets 0$}
\For{$i\gets 1$  to $n$ }
\If{ $\frac{d_{i,i}}{c_{i,i}}\notin QR(p^m)$}{ $flag\gets 1$}
\State{ \bf goto~\ref{A}}
\EndIf
\State $\alpha_i\gets \left(\frac{d_{i,i}}{c_{i,i}}\right)^{1/2}$
\EndFor
\If{$flag=1$}{ \bf continue}\label{A}
\EndIf
\State $flag\gets 0$
\For{$i\gets 1$ to $n$}
\For{$j\gets i+1$ to $n$}
\If{$d_{i,j}c_{j,i}\neq d_{j,i}c_{i,j}$ or $d_{i,j}\neq\alpha_i\alpha_jc_{i,j}$}{ $flag\gets 1$}
\State{ \bf goto~\ref{A'}}
\EndIf
\EndFor
\EndFor
\If{$flag=1$}{ \bf continue}\label{A'}
\EndIf
\For{$D_1\gets \Diag(\alpha_1,\lambda_2,\ldots,\lambda_n)\in\mathcal{D}_{\alpha_1,n}$}
\State $D_2\gets \Diag(1,\frac{\alpha_2}{\lambda_2},\ldots,\frac{\alpha_n}{\lambda_n})$
\State Print matrix $D_1M_1D_2$.
\EndFor
\EndIf
\EndIf
\EndFor
\end{algorithmic}
\end{algorithm}

\begin{remark}
The steps $8$,~$10$,~$11$,~$12$, ~$15$ and $16$ in Algorithm~\ref{Algorithm_2} are redundant when the underlying field is of characteristic $2$. It is evident from the fact that any non-zero element in a finite field of characteristic $2$ is a quadratic residue.
\end{remark}

\begin{remark}
It is worth mentioning that in~\cite{Yang2021}, the authors show that for finding all $n\times n$ involutory MDS matrices over $\mathbb{F}_{2^m}$, the search space can be reduced to $2^{(mn^2)/2}$. Whereas, in our approach, by finding the representative MDS matrices over $\mathbb{F}_{2^m}$, the search space is $2^{m(n-1)^2}$, which exceeds the former when $n\geq 4$. However, it is important to emphasize that our approach is more general, as it not only focuses on involutory MDS matrices but also considers the case of finding all $n\times n$ MDS matrices over $\mathbb{F}_{p^m}$.
\end{remark}

In the following example, we consider an involutory MDS matrix from the literature  ~\cite[Example~7]{kcz} and demonstrate its corresponding triplets of matrices $(D_1, D_2, M_1)$, which are utilized to produce the involutory MDS matrix.

\begin{example}\label{exm:2}
Let $\beta$ be a root of the primitive polynomial $x^4+x+1$, which generates $\mathbb{F}_{2^4}$. Consider the following $4 \times 4$ involutory MDS matrix $M$ over $\mathbb{F}_{2^4}$.
	
\[M=(a_{i,j})=\begin{pmatrix}
\beta^3 + \beta   & \beta^3 + \beta^2 & \beta^2 +\beta & 1 \\
\beta^3 + \beta^2 & \beta^3 + \beta   &  1      & \beta^2 +\beta\\
\beta^2 +\beta & 1 & \beta^3 + \beta   & \beta^3 + \beta^2 \\
 1      & \beta^2 +\beta &\beta^3 + \beta^2 & \beta^3 + \beta
\end{pmatrix}.\]

Since $M$ is involutory, using Theorems \ref{thm:1} and \ref{thm:2}, $M$ can be written as $$M=D_1M_1D_2,$$ where $D_1=\Diag(\alpha_1, \lambda_2, \lambda_3, \lambda_4)$ and $D_2=\Diag(1, \frac{\alpha_2}{\lambda_2}, \frac{\alpha_3}{\lambda_3}, \frac{\alpha_4}{\lambda_4})$ are unique non-singular diagonal matrices, and $M_1$ is a representative MDS matrix.

The representative MDS matrix $M_1$ of the class to which the MDS matrix $M$ belongs is represented as follows:
\[M_1=(c_{i,j})=\begin{pmatrix}
1 & 1 & 1 & 1 \\
1 &\beta^3+\beta^2  &\beta^3+\beta^2+1  &\beta^2+1 \\
1 &\beta^3+\beta^2+1&\beta^2+1          &\beta^2 +\beta +1  \\
1 &\beta^2+1        &\beta^2+\beta+1    &\beta^3
\end{pmatrix}.\]

The inverse of representative MDS matrix $M_1$ is given as:
\[M_1^{-1}=(d_{i,j})=\begin{pmatrix}
\beta^3&\beta^3+\beta^2+\beta+1&\beta^2+\beta+1&1 \\
\beta^3+\beta^2+\beta+1&\beta^3+\beta^2+\beta+1&\beta^2&\beta^2\\
\beta^2+\beta+1 &\beta^2&\beta^2+\beta+1 &\beta^2\\
1&\beta^2&\beta^2&1
\end{pmatrix}.\]

\noindent We can find unique values of $D_1=\Diag(\alpha_1, \lambda_2, \lambda_3, \lambda_4)$ and $D_2=\Diag(1, \frac{\alpha_2}{\lambda_2}, \frac{\alpha_3}{\lambda_3}, \frac{\alpha_4}{\lambda_4})$ from Theorems \ref{thm:1} and \ref{thm:2}.
\begin{eqnarray*}
&&~\lambda_2=\beta^3+\beta^2,~\lambda_3=\beta^2+\beta,~\lambda_4=1\\
&&\alpha_i=\left(\frac{d_{i,i}}{c_{i,i}}\right)^{1/2}, i=1,2,3,4\\		
&&~\alpha_1= \beta^3+\beta, ~\alpha_2=\beta^3, ~\alpha_3=\beta, \alpha_4=\beta^3+\beta^2.\\
\end{eqnarray*}	
The involutory MDS matrix $M$ can easily be obtained by applying $D_1=\Diag(\beta^3+\beta,~\beta^3+\beta^2,~\beta^2+\beta,~1)$ and $D_2=\Diag(1,~\beta^3+\beta^2+\beta+1,~\beta^3+\beta^2+\beta,~\beta^3 + \beta^2)$ to representative MDS matrix $M_1$.
\end{example}

\section{Counting of all $3\times 3$ MDS and involutory MDS matrices over $\mathbb{F}_{2^m}$}\label{Sec:Counting_3_MDS}
In \cite{gmt}, the authors provide an explicit formula for enumerating all $3\times3$ involutory MDS matrices over $\mathbb{F}_{2^m}$. However, a precise formula for calculating the number of all $3\times3$ MDS matrices over $\mathbb{F}_{2^m}$ remains absent in existing literature. In this section, we take advantage of our proposed method to determine the conditions under which the representative matrix becomes an MDS matrix. We explore the possible choices of representative matrices $M_1$ based on these conditions. Then, by considering the number of appropriate choices for $D_1$ and $D_2$, we can easily find the total number of MDS matrices $\Phi(D_1,D_2,M_1)$. Lemma~\ref{Lemma_count_3_MDS} outlines the conditions for the representative matrix $M_1$ to be MDS, and Theorem~\ref{Th_count_3_MDS} provides the formula for enumerating all $3\times 3$ MDS matrices over $\mathbb{F}_{2^m}$.

\begin{lemma}\label{Lemma_count_3_MDS}
The representative matrix $M_1$ of order $3\times3$ over $\mathbb{F}_{2^m}$ as follows:
\begin{equation*}
\begin{aligned}
M_1 & =
\begin{bmatrix}
1 & 1 & 1 \\
1 & a & b \\
1 & c & d
\end{bmatrix}
\end{aligned}
\end{equation*}
is MDS if and only if the tuple $(a, b, c, d) \in \mathbb{F}_{2^m}^4$ satisfies the  conditions: (i) $a, b, c, d \in \mathbb{F}_{2^m}\setminus \set{0,1}$, (ii) $b \neq a$,  $c \neq a$,  $d \neq b$,  $d \neq c$,  $d \neq a^{-1}b c$, and (iii) $d+1 \neq (a+1)^{-1}(b+1) (c+1)$.
\end{lemma}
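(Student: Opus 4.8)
The plan is to work directly from the Fact that a square matrix is MDS if and only if every square submatrix is non-singular. Since $M_1$ has order $3$, only minors of orders $1$, $2$, and $3$ occur, and each is a low-degree polynomial in $a,b,c,d$ that can be written down explicitly; collecting the resulting non-vanishing conditions and simplifying over $\mathbb{F}_{2^m}$ (where $x-y=x+y$) should reproduce exactly (i)--(iii).

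For the minors of order $1$, the entries of $M_1$ are $1$ and $a,b,c,d$, so non-singularity of all of them forces $a,b,c,d\neq 0$. For the minors of order $2$, I would enumerate all $\binom{3}{2}^2 = 9$ choices of two rows and two columns. A $2\times 2$ submatrix is either the bottom-right block $R=\begin{pmatrix} a & b \\ c & d\end{pmatrix}$ itself, or it meets the all-ones first row and/or first column of $M_1$. The four submatrices meeting both the first row and first column have determinants $a+1,\,b+1,\,c+1,\,d+1$, giving $a,b,c,d\neq 1$; the four submatrices meeting exactly one of the first row or the first column have determinants $a+b,\,c+d,\,a+c,\,b+d$, giving $a\neq b$, $c\neq d$, $a\neq c$, $b\neq d$; and $R$ itself has determinant $ad+bc$, giving $ad\neq bc$, equivalently $d\neq a^{-1}bc$ since $a\neq 0$. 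Together with $a,b,c,d\neq 0$ these are precisely conditions (i) and (ii).

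For the single minor of order $3$, namely $|M_1|$, I would apply the row operations $R_2\leftarrow R_1+R_2$ and $R_3\leftarrow R_1+R_3$ --- the same reduction to the matrix $M_1'$ used in Theorems~\ref{thm:3} and~\ref{thm:4} --- obtaining $|M_1| = |R-U| = (a+1)(d+1)+(b+1)(c+1)$, where $U$ is the all-ones matrix. Non-vanishing of this expression, combined with $a+1\neq 0$ which is already guaranteed by (i), is equivalent to $d+1\neq (a+1)^{-1}(b+1)(c+1)$, i.e. condition (iii). For the converse, assuming (i)--(iii) the same computations read backwards show that every square submatrix of $M_1$ is non-singular, so $M_1$ is MDS; alternatively this direction follows at once from Corollary~\ref{cor:2}, since (i)--(iii) are exactly the translation of conditions $1$--$5$ of Theorem~\ref{thm:3} applied to the $2\times 2$ matrix $R$.

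I do not anticipate a genuine obstacle here: the content is a finite, explicit computation. The only care needed is bookkeeping --- accounting for all nine $2\times 2$ positions and checking that the list of conditions produced matches (i)--(iii) with nothing missing or redundant --- together with keeping track of the characteristic-$2$ simplifications throughout.
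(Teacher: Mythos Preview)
Your proposal is correct and essentially follows the paper's own argument: the paper invokes Corollary~\ref{cor:2} and then translates conditions 1--5 of Theorem~\ref{thm:3} for the $2\times 2$ block $R=\begin{pmatrix}a&b\\c&d\end{pmatrix}$, which amounts to exactly the same minor-by-minor computation you carry out directly. The only cosmetic difference is packaging---the paper routes through the general framework, whereas you enumerate the nine $2\times 2$ submatrices and the single $3\times 3$ determinant explicitly---but the content and the resulting conditions are identical.
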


\begin{proof}
We can infer from Corollary \ref{cor:2} that $M_1$ will be an MDS matrix if and only if the conditions mentioned in Theorem \ref{thm:3} hold. Thus, $M_1$ is an MDS matrix if and only if the following conditions hold for $R=\begin{pmatrix}
a &b\\c &d
\end{pmatrix}$:

\begin{enumerate}
\item $R$ is MDS. This is true if and only if $a,b,c,d$ are not equal to $0$ and $ad-bc\neq0$ i.e., $d \neq a^{-1}b c$.
\item Given $i,j\in\{1,2\}$, the modified matrix $R$ obtained by replacing $i^{th}$ row or $j^{th}$ column or both of $R$ by all ones is non-singular. It means $b\neq a$, $c\neq a$, $d\neq c$, $d\neq b$, and $a,b,c,d \neq 1$.
\item  No entry of $R$ can be $1$. This is already covered in condition $2$.
\item A column or row of $R$ has distinct elements. This is also covered in condition $2$.
\item $R-U$ is non-singular, where $U$ is a $2\times 2$ matrix with all entries $1$. This is equivalent to the statement $|R-U|=|M_1|\neq 0$, i.e., $$d+1 \neq (a+1)^{-1}(b+1) (c+1).$$
\end{enumerate}
This completes the proof.
\end{proof}

\begin{theorem}\label{Th_count_3_MDS}
The count of all representative MDS matrices of order  $3\times 3$ over the finite field $\mathbb{F}_{2^m}$ is given by $(2^m-2)(2^m-3)(2^{2m}-9\cdot 2^m+21)$.
\end{theorem}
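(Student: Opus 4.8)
The plan is to count the tuples $(a,b,c,d) \in \mathbb{F}_{2^m}^4$ satisfying conditions (i), (ii), and (iii) of Lemma~\ref{Lemma_count_3_MDS} via inclusion–exclusion, organized by first fixing the pair $(a, b, c)$ in a suitable ``good'' configuration and then counting the admissible values of $d$. Write $q = 2^m$ for brevity. First I would impose condition (i) only on $a$: there are $q-2$ choices for $a \in \mathbb{F}_{2^m}\setminus\{0,1\}$. Then I would choose $b, c \in \mathbb{F}_{2^m}\setminus\{0,1\}$ subject to $b \neq a$ and $c \neq a$, giving $(q-3)$ choices each, so $(q-3)^2$ choices for the ordered pair $(b,c)$. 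That handles everything in (i) and (ii) that does not involve $d$; what remains is to count, for each such fixed triple $(a,b,c)$, the number of $d \in \mathbb{F}_{2^m}\setminus\{0,1\}$ avoiding the four forbidden values
\[
d \notin \{\, b,\ c,\ a^{-1}bc,\ 1 + (a+1)^{-1}(b+1)(c+1) \,\}
\]
coming from (ii) (the constraints $d \neq b$, $d \neq c$, $d \neq a^{-1}bc$) and (iii).

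The core of the argument is then an inclusion–exclusion over the set $\{0, 1, b, c, a^{-1}bc, e\}$ of six potentially-forbidden values for $d$, where $e := 1 + (a+1)^{-1}(b+1)(c+1)$; the count of valid $d$ is $q$ minus the size of the union of these six singletons, which by inclusion–exclusion reduces to counting coincidences among them. The key observations I would establish are: $0 \neq 1$; $b, c \notin \{0,1\}$ and $a^{-1}bc \neq 0$ automatically; and $e \neq 0$ iff $(a+1)^{-1}(b+1)(c+1) \neq 1$, while $e = 1$ iff $(b+1)(c+1) = 0$, which is impossible since $b, c \neq 1$. So the genuine coincidences to track are among $\{b, c, a^{-1}bc, e\}$ together with the events $a^{-1}bc \in \{0,1\}$ (only $a^{-1}bc = 1$ is possible) and $e = 0$. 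Because the number of valid $d$ will depend on how many of these coincidences occur, I expect the clean way forward is to instead stratify the triples $(a,b,c)$ by the multiset of distinct forbidden values they produce, count the triples in each stratum, multiply by the corresponding number of good $d$'s, and sum. A helpful simplification: many of the coincidence loci are themselves curves in $(a,b,c)$-space whose point counts are elementary (e.g. $b = c$ is $q-3$ triples per fixed $a$; $a^{-1}bc = 1$, i.e. $bc = a$, together with $b,c \neq 0,1$ and $b,c \neq a$, is a near-linear count in $q$; and so on).

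After assembling the sum $\sum_{(a,b,c)} \#\{\text{good } d\}$, the final step is pure polynomial algebra: collect terms and verify the total equals $(q-2)(q-3)(q^2 - 9q + 21)$. As a sanity check I would confirm the leading behavior: the ``generic'' triple $(a,b,c)$ (no coincidences) contributes $q - 6$ good values of $d$, and there are roughly $(q-2)(q-3)^2$ such triples, giving a leading term $q^4$, matching $(q-2)(q-3)(q^2-9q+21) = q^4 - \cdots$; the lower-order corrections must come out exactly from the coincidence strata. I anticipate the main obstacle is bookkeeping the inclusion–exclusion correctly: the forbidden values $b$, $c$, $a^{-1}bc$, and $e$ can coincide in several non-obvious ways (for instance $e = b$ or $e = a^{-1}bc$ are quadratic/cubic conditions relating $a,b,c$), and one must be careful not to double-count triples lying on the intersection of two such loci, nor to miss degenerate sub-cases forced by $a, b, c \notin \{0,1\}$ in characteristic $2$ (where, e.g., $x + x = 0$ can collapse expressions unexpectedly). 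Getting every stratum's triple-count right, and checking that the characteristic-$2$ identities do not introduce extra coincidences, is where the real care is needed; once the strata and their weights are correct, the concluding polynomial identity is mechanical.
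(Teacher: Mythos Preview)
Your plan is correct and will reach the stated formula; it differs from the paper's argument chiefly in how condition~(iii) is packaged. The paper rewrites (iii) as non-singularity of $A=\begin{pmatrix} a+1 & b+1\\ c+1 & d+1\end{pmatrix}$, counts pairs $(c,d)$ for fixed $(a,b)$ by first \emph{ignoring} this constraint (so $d$ avoids only $T=\{0,1,b,c,a^{-1}bc\}$, whose cardinality is stratified simply by whether $b=c$ and/or $bc=a$), and then subtracts the linearly dependent second rows $(c+1,d+1)=k(a+1,b+1)$; a short check shows exactly $q-4$ such rows were not already excluded, so no coincidence analysis involving your sixth value $e$ is ever needed. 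Your direct inclusion--exclusion also works, and the difficulty you anticipate is tamer than you fear: in characteristic~$2$ one verifies $e=b\Leftrightarrow c=a$, $e=c\Leftrightarrow b=a$, $e=a^{-1}bc\Leftrightarrow (a+b)(a+c)=0$, and $e=1\Leftrightarrow (b{+}1)(c{+}1)=0$, all impossible under the standing hypotheses on $(a,b,c)$, so the \emph{only} live coincidence involving $e$ is $e=0$, which occurs for the single admissible value $c=(a+b)(b+1)^{-1}$. Together with $b=c$ and $bc=a$ (each also pinning down one admissible $c$, and all three collapsing precisely when $a=b^{2}$), the stratification is then routine and yields $q^{2}-9q+21$ valid $(c,d)$ for every choice of $(a,b)$, independent of whether $a=b^{2}$. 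The two arguments are thus two views of the same count: the paper's linear-combination device encodes exactly the fact that $e$ never collides with $1$, $b$, $c$, or $a^{-1}bc$.
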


\begin{proof}
According to Lemma~\ref{Lemma_count_3_MDS}, the number of all representative MDS matrices of order  $3\times 3$ over $\mathbb{F}_{2^m}$ is equal to the cardinality of the set $S$, defined as:
	
\begin{center}
\begin{small}
$S=\{\splitatcommas{(a, b, c, d)\in (\mathbb{F}_{2^m}\setminus \set{0,1})^4: ~b \neq a, ~c \neq a, ~d \neq b, ~d \neq c, ~d \neq a^{-1}b c, ~d+1 \neq (a+1)^{-1}(b+1) (c+1)}\}$.
\end{small}
\end{center}
	
Since the condition $d+1 \neq (a+1)^{-1}(b+1) (c+1)$ implies that the representative matrix is non-singular, to find the cardinality of $S$ we will find all non-singular matrices of the form
\begin{equation*}
\begin{bmatrix}
1 & 1 & 1 \\
1 & a & b \\
1 & c & d
\end{bmatrix},
\end{equation*}
	where ~$a, b, c, d \in \mathbb{F}_{2^m}\setminus \set{0,1}$, and ~$b \neq a$, ~$c \neq a$, ~$d \neq b$, ~$d \neq c$, ~$d \neq a^{-1}b c$.
	
It is easy to check that,
\begin{equation*}
\begin{bmatrix}
1 & 1 & 1 \\
1 & a & b \\
1 & c & d
\end{bmatrix}~\text{is non-singular}~\iff~
\begin{bmatrix}
a+1 & b+1 \\
c+1 & d+1
\end{bmatrix}~\text{is non-singular}.
\end{equation*}
	
Therefore, the problem of finding the cardinality of $S$ is now reduced to the problem of finding all non-singular matrices of the form
	\begin{equation}\label{Eqn_2_Rep_MDS}
\begin{aligned}
A & =
\begin{bmatrix}
a+1 & b+1 \\
c+1 & d+1
\end{bmatrix},
\end{aligned}
\end{equation}
where ~$a, b, c, d \in \mathbb{F}_{2^m}\setminus \set{0,1}$, and ~$b \neq a$, ~$c \neq a$, ~$d \neq b$, ~$d \neq c$, ~$d \neq a^{-1}b c$.

Now, since $a \in \mathbb{F}_{2^m}\setminus \set{0,1}$, we have $2^m-2$ many choices for $a$. Also, since $b \in \mathbb{F}_{2^m}\setminus \set{0,1}$ and $b \neq a$, we have $2^m-3$ many choices for $b$. Therefore, for the first row of $A$, there are $(2^m-2)(2^m-3)$ many choices.
	
Since $c \in \mathbb{F}_{2^m}\setminus \set{0,1}$ and $c\neq a$, for $c$, we have $2^m-3$ many choices over $\mathbb{F}_{2^m}$. Also, we can see that $d \not \in \set{0,1,b,c,a^{-1}bc}$. Moreover, to ensure the non-singularity of $A$, we need to exclude the linear combinations of the first row from the total choices available for the second row. Specifically, we need to exclude the cases when
	\begin{center}
$(c+1, d+1) = k(a+1, b+1)$ for some $k \in \mathbb{F}_{2^m}$.
\end{center}
	
However, specific linear combinations have already been omitted based on the choices of $c$ and $d$. For example, cases where $k=0, 1$ have already been excluded because of the conditions $c,d \in \mathbb{F}_{2^m}\setminus \set{0,1}$ and $c \neq a$ and $d \neq b$.
	
Also, when $k=(a+1)^{-1}$, we have $c+1=1\implies c=0$, which have already been excluded from the choices of $c$. Similarly, the choice for $k=(b+1)^{-1}$ have already been excluded.
	
Now, $c+1=k(a+1)$, we have $c=k(a+1)+1$. Similarly, $d=k(b+1)+1$. Thus,
	$$d =c \implies k(b+1)+1 = k(a+1)+1 \implies b=a,$$ which contradicts the fact $b\neq a$.
	
Moreover,
\begin{equation*}
\begin{aligned}
& d = a^{-1}bc\\
\implies & k(b+1)+1 = a^{-1}b [k(a+1)+1]\\
\implies & k(b+1)+1 = k (b+ a^{-1}b)+a^{-1}b\\
\implies & k(a^{-1}b+1) = a^{-1}b+1\\
\implies & k=1.
\end{aligned}
\end{equation*}
	
Thus, the values of $k \in \set{0,1,(a+1)^{-1},(b+1)^{-1}}$ have already been omitted due to the choices of $c$ and $d$. Therefore, to ensure the non-singularity of $A$, we need to exclude $2^m-4$ many linear combinations from the total choices available for the second row.
	

\begin{figure}[h]
\centering
\includegraphics[width=0.8\linewidth]{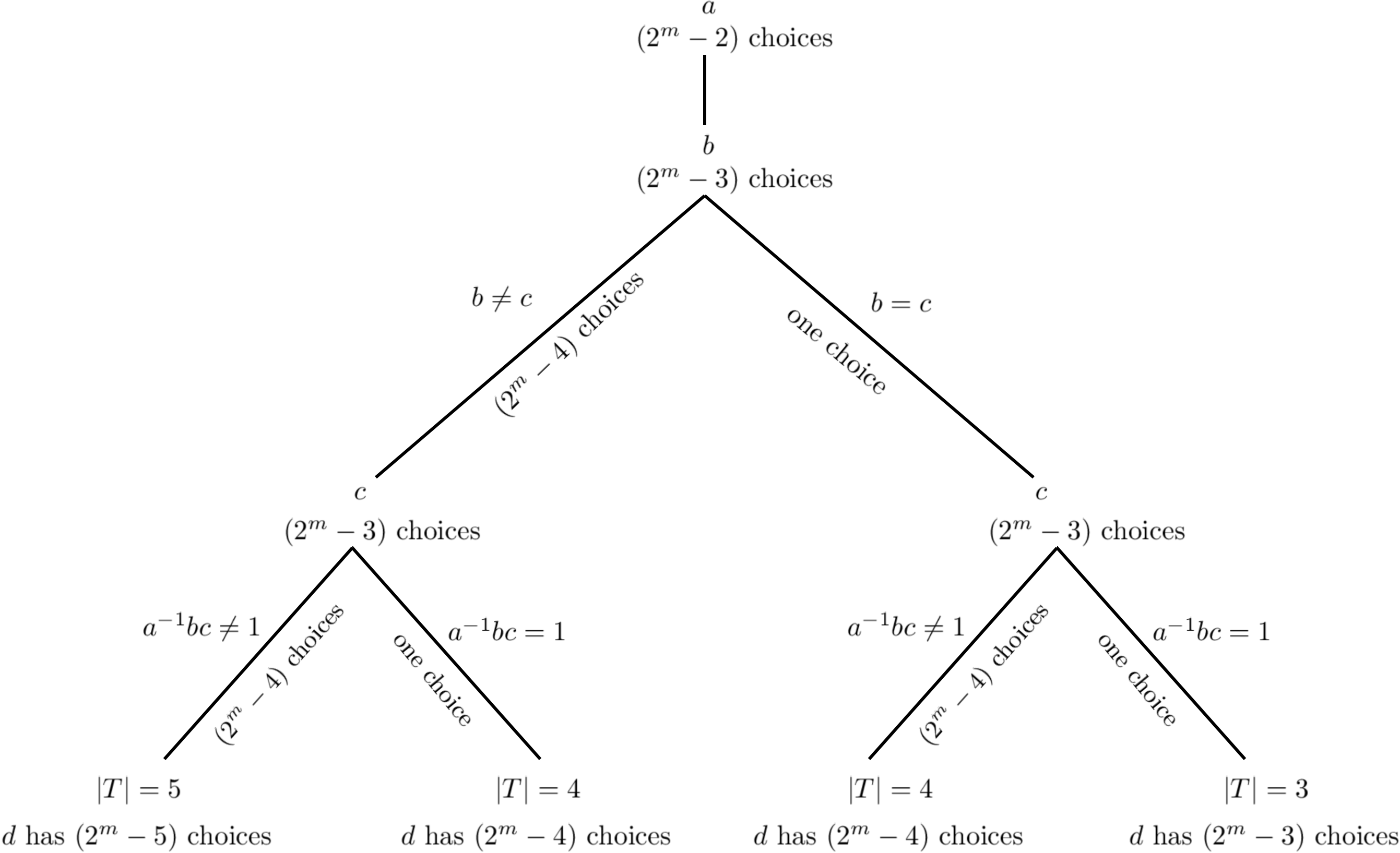}
\caption{A figure illustrating the cases for determining the number of choices for $d$.}
\label{fig:proof_counting}
\end{figure}

Now, we can see that $d \not \in T=\set{0,1,b,c,a^{-1}bc}$. However, it is important to note that each element of $T$ may not be distinct. When $c=ab^{-1}$, we find that $a^{-1}bc=1$. Additionally, $b=c$ may hold. Now,
	\begin{equation*}
\begin{aligned}
& a^{-1}bc=b \implies a=c~~\text{and}~~ a^{-1}bc=c \implies a=b,
\end{aligned}
\end{equation*}
	which leads to contradictions. Moreover, since $a,b,c \in \mathbb{F}_{2^m}^{*}$, we have $a^{-1}bc \neq 0$.

We will now analyze two cases to determine the cardinality of the set $T$.
	
\begin{enumerate}[I.]
\itemsep1em
		
\item \textbf{Case 1:} $b \neq c$. We can further divide this case into two subcases:
		
\begin{enumerate}[(i)]
\item \textit{Subcase 1.1:} $a^{-1}bc \neq 1$. In this scenario, the cardinality of set $T$ is $|T|=5$. Therefore, for $d$, there are $2^m-5$ choices over $\mathbb{F}_{2^m}$.
			
\item \textit{Subcase 1.2:} $a^{-1}bc=1$. In this instance, the cardinality of set $T$ is $|T|=4$. Consequently, for $d$, there are $2^m-4$ choices over $\mathbb{F}_{2^m}$.
\end{enumerate}	
		
\item \textbf{Case 2:} $b=c$. We can further divide this case into two subcases:
		
\begin{enumerate}[(i)]
\item \textit{Subcase 2.1:} $a^{-1}bc \neq 1$. In this scenario, the cardinality of set $T$ is $|T|=4$. Therefore, for $d$, there are $2^m-4$ choices over $\mathbb{F}_{2^m}$.
			
\item \textit{Subcase 2.2:} $a^{-1}bc=1$. In this instance, the cardinality of set $T$ is $|T|=3$. Consequently, for $d$, there are $2^m-3$ choices over $\mathbb{F}_{2^m}$.
\end{enumerate}
\end{enumerate}	
	
Now, depending on the four cases, we can determine the total number of matrices of the form $A$ (see (\ref{Eqn_2_Rep_MDS})). Figure~\ref{fig:proof_counting} illustrates the cases for determining the choices of $d$ in each case. Also, as mentioned earlier, to find the number of all non-singular matrices in the form of $A$, we must exclude the $2^m-4$ linear combinations of the first row from the total choices available for the second row.
	
Therefore, the total number of all representative MDS matrices of order $3$ over $\mathbb{F}_{2^m}$ is given by
	\begin{align*}
& (2^m-2)[(2^m-4) \{(2^m-4)(2^m-5)+1\cdot (2^m-4) - (2^m-4)\}\\
&\hspace*{2cm} + 1\cdot \{(2^m-4)(2^m-4)+1\cdot (2^m-3)  - (2^m-4)\}]\\
&= (2^m-2)[(2^m-4)(2^m-4)(2^m-5)+(2^m-4)(2^m-5)+(2^m-3)]\\
&= (2^m-2)[(2^m-4)(2^m-5)\{(2^m-4)+1 \}+(2^m-3)]\\
&= (2^m-2)[(2^m-3)\{(2^m-4)(2^m-5)+1 \}]\\
&= (2^m-2)(2^m-3)(2^{2m}-9\cdot 2^m+21).
\end{align*}
This completes the proof.
\end{proof}

From a single representative MDS matrix $M_1$ of order $n$, one can generate the MDS matrices $\Phi(D_1,D_2,M_1)$, where $D_1$ and $D_2$ are non-singular diagonal matrices of order $n$ with $1$ as the first entry of $D_2$. Thus, given a representative MDS matrix $M_1$, we can generate $(2^m-1)^{2n-1}$ MDS matrices. Therefore, by applying Theorem~\ref{Th_count_3_MDS}, we can derive the explicit formula for counting all $3\times 3$ MDS matrices over the finite field $\mathbb{F}_{2^m}$.

\begin{corollary} \label{cor:MDS_Count}
The count of all $3\times 3$ MDS matrices over the finite field $\mathbb{F}_{2^m}$ is given by $(2^m-1)^5(2^m-2)(2^m-3)(2^{2m}-9\cdot 2^m+21)$.
\end{corollary}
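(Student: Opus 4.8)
The plan is to derive Corollary~\ref{cor:MDS_Count} as an immediate consequence of Theorem~\ref{Th_count_3_MDS} together with the bijective correspondence $\Phi$ established in Theorem~\ref{thm:1}. The key observation is that every $3\times 3$ matrix over $\mathbb{F}_{2^m}^*$ decomposes uniquely as $D_1 M_1 D_2$ with $M_1$ a representative matrix and $D_1, D_2$ non-singular diagonal matrices, $D_2$ having first entry $1$. I would first argue that every $3\times 3$ MDS matrix over $\mathbb{F}_{2^m}$ has all entries nonzero (this follows from the Fact that every square sub-matrix, in particular every $1\times 1$ sub-matrix, is non-singular), hence lies in $\mathcal{T}_3(\mathbb{F}_{2^m})$ and falls under the scope of $\Phi$.

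Next I would establish that $\Phi$ restricts to a bijection between MDS matrices: by Lemma~\ref{Lemma_DMD_MDS}, if $M_1$ is MDS then $D_1 M_1 D_2$ is MDS for any non-singular diagonal $D_1, D_2$; conversely, if $M = D_1 M_1 D_2$ is MDS then, again by Lemma~\ref{Lemma_DMD_MDS} applied with $D_1^{-1}$ and $D_2^{-1}$, the representative $M_1 = D_1^{-1} M D_2^{-1}$ is MDS. Therefore the set of $3\times 3$ MDS matrices over $\mathbb{F}_{2^m}$ is in bijection (via $\Phi$) with the set of triples $(D_1, D_2, M_1)$ where $M_1$ ranges over representative MDS matrices of order $3$ and $D_1, D_2$ range over their respective sets of diagonal matrices. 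Counting the diagonal factors: $D_1 = \Diag(\lambda_1,\lambda_2,\lambda_3)$ contributes $(2^m-1)^3$ choices and $D_2 = \Diag(1,\theta_2,\theta_3)$ contributes $(2^m-1)^2$ choices, for a total multiplicative factor of $(2^m-1)^5 = (2^m-1)^{2\cdot 3 - 1}$, exactly the factor mentioned in the paragraph preceding the corollary.

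Combining these, the count of all $3\times 3$ MDS matrices equals $(2^m-1)^5$ times the number of representative MDS matrices of order $3$, which Theorem~\ref{Th_count_3_MDS} gives as $(2^m-2)(2^m-3)(2^{2m}-9\cdot 2^m+21)$. Multiplying yields $(2^m-1)^5(2^m-2)(2^m-3)(2^{2m}-9\cdot 2^m+21)$, as claimed.

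I do not anticipate a genuine obstacle here, since the corollary is essentially a bookkeeping consequence of results already proved; the one point requiring a small amount of care is verifying that $\Phi$ genuinely restricts to a bijection on MDS matrices — that is, that the diagonal conjugation does not create or destroy the MDS property and that distinct triples still map to distinct matrices (the latter is immediate from the uniqueness in Theorem~\ref{thm:1}). Once that is in place, the final count is just the product of the number of representative MDS matrices with $(2^m-1)^5$.
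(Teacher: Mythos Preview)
Your proposal is correct and follows essentially the same approach as the paper: the paper simply observes that each representative MDS matrix $M_1$ gives rise to $(2^m-1)^{2n-1}$ MDS matrices via the choices of $D_1$ and $D_2$, and then multiplies this factor (with $n=3$) by the count from Theorem~\ref{Th_count_3_MDS}. Your write-up is in fact more careful than the paper's, explicitly invoking Lemma~\ref{Lemma_DMD_MDS} in both directions and the uniqueness from Theorem~\ref{thm:1} to justify that $\Phi$ restricts to a bijection on MDS matrices.
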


In the following lemma, we mention the explicit formula obtained in~\cite{gmt} for enumerating all $3\times 3$ involutory MDS matrices over $\mathbb{F}_{2^m}$.

\begin{lemma}\cite{gmt}\label{lemma:inv_3_MDS_count}
The number of all $3 \times 3$ involutory MDS matrices over $\mathbb{F}_{2^m}$ is given by $(2^m-1)^2 (2^m-2)(2^m-4)$.
\end{lemma}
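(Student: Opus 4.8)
The plan is to run the same bookkeeping as in the proof of Theorem~\ref{Th_count_3_MDS} and Corollary~\ref{cor:MDS_Count}, but with the involutory constraints of Theorem~\ref{thm:2} switched on. By Theorem~\ref{thm:1} every $3\times 3$ matrix over $\mathbb{F}_{2^m}^*$, hence every $3\times 3$ MDS matrix, equals $\Phi(D_1,D_2,M_1)$ for a unique triple, and by Lemma~\ref{Lemma_DMD_MDS} the representative $M_1$ is itself MDS. By Theorem~\ref{thm:2}, the involutory matrices lying in the $\Phi$-class of a fixed representative MDS matrix $M_1$ are exactly the matrices $\Phi\bigl(\Diag(\alpha_1,\lambda_2,\lambda_3),\Diag(1,\tfrac{\alpha_2}{\lambda_2},\tfrac{\alpha_3}{\lambda_3}),M_1\bigr)$ with $\lambda_2,\lambda_3\in\mathbb{F}_{2^m}^*$ arbitrary, where each $\alpha_i=(d_{i,i}/c_{i,i})^{1/2}$ is uniquely determined by $M_1$ (square roots are unique in characteristic $2$) --- provided $M_1$ satisfies $d_{i,j}=\alpha_i\alpha_j c_{i,j}$; otherwise that class contains no involutory matrix. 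So the total count is $N\cdot(2^m-1)^2$, where $N$ is the number of $3\times 3$ representative MDS matrices satisfying the involutory condition, and everything reduces to proving $N=(2^m-2)(2^m-4)$.

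To pin down $N$, I would write $M_1=\begin{pmatrix}1&1&1\\1&a&b\\1&c&d\end{pmatrix}$ under the MDS constraints of Lemma~\ref{Lemma_count_3_MDS}, and compute the inverse explicitly via the adjugate: over $\mathbb{F}_{2^m}$,
\[
M_1^{-1}=\Delta^{-1}\begin{pmatrix} ad+bc & c+d & a+b\\ b+d & d+1 & b+1\\ a+c & c+1 & a+1\end{pmatrix},\qquad \Delta=(a+1)(d+1)+(b+1)(c+1).
\]
The quadratic-residue conditions in part~1 of the corollary following Theorem~\ref{thm:2} are automatic in characteristic $2$, and each $\alpha_i$ is well defined and nonzero precisely because $M_1$ is MDS. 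The symmetry conditions $\tfrac{d_{i,j}}{c_{i,j}}=\tfrac{d_{j,i}}{c_{j,i}}$ applied to the $(1,2)$ entry give $c+d=b+d$, i.e.\ $c=b$, and the other two give the same; so $M_1$ must be symmetric. With $c=b$, squaring the three off-diagonal identities $d_{1,2}=\alpha_1\alpha_2$, $d_{1,3}=\alpha_1\alpha_3$, $d_{2,3}=b\,\alpha_2\alpha_3$ (squaring is harmless in characteristic $2$) and clearing denominators, each collapses after a short factorisation to the single relation
\[
d\,(a+b^2)=b^2(a+1).
\]

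It remains to count solutions of this relation inside the admissible region. If $a=b^2$ the relation forces $b^2(a+1)=0$, impossible since $b\notin\{0,1\}$; hence $a\neq b^2$ and $d=b^2(a+1)/(a+b^2)$ is \emph{forced} by $(a,b)$. Plugging this $d$ back in, each remaining MDS side-condition of Lemma~\ref{Lemma_count_3_MDS} --- namely $d\notin\{0,1,b,a^{-1}b^2\}$ and $\Delta\neq 0$ --- simplifies in characteristic $2$ to ``$b=1$ or $a=b$'', both already excluded; so every such $(a,b)$ genuinely yields a representative MDS matrix meeting the involutory condition. Consequently the valid representatives are in bijection with pairs $(a,b)$, $a,b\in\mathbb{F}_{2^m}\setminus\{0,1\}$, $a\neq b$, $a\neq b^2$. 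There are $(2^m-2)(2^m-3)$ pairs with $a\neq b$, of which exactly $2^m-2$ satisfy $a=b^2$ (for each admissible $b$, the value $b^2$ lies in $\mathbb{F}_{2^m}\setminus\{0,1\}$ and $b^2\neq b$); hence $N=(2^m-2)(2^m-3)-(2^m-2)=(2^m-2)(2^m-4)$, and multiplying by $(2^m-1)^2$ gives the stated formula.

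The main obstacle is entirely computational: verifying that all three off-diagonal involutory identities really do collapse to the one relation $d(a+b^2)=b^2(a+1)$, and that each MDS side-condition on the forced value of $d$ reduces to an excluded case. Both amount to a handful of characteristic-$2$ factorisations (for instance $a^2b^2+b^2+a^2+b^4=(b^2+1)(a^2+b^2)$) together with careful cofactor bookkeeping. The only non-arithmetic point to watch is the absence of double counting: distinct valid $(a,b)$ sit in distinct $\Phi$-classes, and distinct $(\lambda_2,\lambda_3)$ give distinct matrices --- both immediate from the uniqueness in Theorem~\ref{thm:1}.
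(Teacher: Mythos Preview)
Your proof is correct. Note, however, that the paper does not actually prove this lemma: it is simply quoted from \cite{gmt} with no argument given, so there is no ``paper's own proof'' to compare against. What you have supplied is a self-contained derivation using the paper's representative-matrix machinery (Theorems~\ref{thm:1}--\ref{thm:3} and Lemma~\ref{Lemma_count_3_MDS}), which is a genuine bonus: it shows the cited formula falls out of the same framework used for Theorem~\ref{Th_count_3_MDS}, rather than requiring an external reference. Your key steps---that the symmetry conditions force $c=b$, that all three squared off-diagonal identities collapse to the single relation $d(a+b^2)=b^2(a+1)$, and that the forced $d$ automatically clears every MDS side-condition of Lemma~\ref{Lemma_count_3_MDS}---all check; in particular, after substituting the forced value of $d$, the numerator of $\Delta$ factors as $(a+b)^2(b+1)^2$, confirming $\Delta\neq 0$. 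The multiplier $(2^m-1)^{n-1}=(2^m-1)^2$ from the free parameters $\lambda_2,\lambda_3$ in Theorem~\ref{thm:2} then gives the stated count, with no double counting thanks to the uniqueness in Theorem~\ref{thm:1}.
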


Thus, form Corollary~\ref{cor:MDS_Count} and Lemma~\ref{lemma:inv_3_MDS_count}, one can determine the explicit formula for enumerating all $3\times 3$ non-involutory MDS matrices over $\mathbb{F}_{2^m}$.

\begin{corollary}
The number of all $3 \times 3$ non-involutory MDS matrices over $\mathbb{F}_{2^m}$ is given by $(2^{m} - 1)^2(2^{m} - 2)(2^{6m} - 15\cdot 2^{5m} + 87\cdot 2^{4m} - 244\cdot 2^{3m} + 345\cdot 2^{2m} - 238\cdot 2^{m} + 67)$.
\end{corollary}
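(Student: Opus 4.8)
The plan is to obtain the count of $3\times 3$ non-involutory MDS matrices over $\mathbb{F}_{2^m}$ as the simple difference between the total count of $3\times 3$ MDS matrices (Corollary~\ref{cor:MDS_Count}) and the count of $3\times 3$ involutory MDS matrices (Lemma~\ref{lemma:inv_3_MDS_count}). The only genuine content is the polynomial arithmetic, so the proof will be short: set $q = 2^m$ for readability, write both known quantities in terms of $q$, expand each into a polynomial in $q$, subtract, and then factor the result to exhibit the stated closed form. First I would record
$$
N_{\mathrm{MDS}} = (q-1)^5(q-2)(q-3)(q^2-9q+21), \qquad
N_{\mathrm{inv}} = (q-1)^2(q-2)(q-4),
$$
and note that every involutory MDS matrix is in particular an MDS matrix (so the subtraction is legitimate and the difference is the count of non-involutory MDS matrices).

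Next I would factor out the common divisor $(q-1)^2(q-2)$ from both terms, reducing the task to showing
$$
(q-1)^3(q-3)(q^2-9q+21) - (q-4)
\;=\; q^6 - 15q^5 + 87q^4 - 244q^3 + 345q^2 - 238q + 67.
$$
This is a direct expansion: first multiply $(q-3)(q^2-9q+21) = q^3 - 12q^2 + 48q - 63$, then multiply by $(q-1)^3 = q^3 - 3q^2 + 3q - 1$ to get a degree-$6$ polynomial, and finally subtract $(q-4)$. Re-attaching the factor $(q-1)^2(q-2)$ and substituting $q = 2^m$ back gives exactly the claimed expression, since $(q-1)^2(q-2) = (2^m-1)^2(2^m-2)$.

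The main (and only) obstacle is arithmetic bookkeeping: the degree-$6$ expansion has to be carried out carefully, as a sign or coefficient slip would be invisible without a cross-check. I would verify the computation in two independent ways --- once by the ``multiply the two cubics termwise'' route above, and once by numerically evaluating both sides at a couple of small values of $q$ (e.g.\ $q=2,3,4,8$) to confirm the identity, also using the fact that the non-involutory count must vanish precisely where $N_{\mathrm{MDS}} = N_{\mathrm{inv}}$. Once the polynomial identity is confirmed, the corollary follows immediately with no further case analysis.
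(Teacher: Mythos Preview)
Your approach is exactly what the paper does: it states that the corollary follows immediately from Corollary~\ref{cor:MDS_Count} and Lemma~\ref{lemma:inv_3_MDS_count} by subtraction, without writing out any of the polynomial arithmetic. Your proposal simply makes that subtraction explicit, and the expansion you outline is correct, so there is nothing to add.
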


\section{Counting of all $4\times 4$ MDS and involutory MDS matrices over $\mathbb{F}_{2^m}$ }\label{Sec:Counting_4_MDS}
In this section, we enumerate all $4\times 4$ MDS and involutory MDS matrices over $\mathbb{F}_{2^m}$ for some specific values of $m$. Using our proposed method, we enumerate all $4\times 4$ MDS matrices by counting all representative MDS matrices. By the MDS conjecture, there are no $4\times 4$ MDS matrices over $\mathbb{F}_{2^2}$. Therefore, here we search for all representative MDS matrices of order $4$ over $\mathbb{F}_{2^m}$ where $m\geq 3$.

To search for all representative MDS matrices of order $4$ over $\mathbb{F}_{2^m}$, we use Theorem \ref{thm:1} to define the matrix form $M_1$ as follows:
\[M_1=\begin{pmatrix}
1 &1 &1 &1\\
1&&&\\
1&&R&\\
1&&&\\
\end{pmatrix},\]
where $R$ is a $3\times 3$ matrix. As per Theorem \ref{thm:3}, if $M_1$ is MDS, then $R$ must be MDS. We again apply Theorem \ref{thm:1} on $R$ to write it as $$R=D_1M_1'D_2, \ \text{where}\ M_1^{'}=\begin{pmatrix}
1 &  1& 1\\
1 & a& b\\
1 & c &d \\
\end{pmatrix} \text{is a representative matrix of order $3$}.$$
Here $a, b, c, d \in \mathbb{F}_{2^m}^*$, $D_1=\Diag(\lambda_1,\lambda_2,\lambda_3)$ and $D_2=\Diag(1,\theta_2,\theta_3)$ for $\lambda_1, \lambda_2, \lambda_3,\linebreak \theta_2, \theta_3\in \mathbb{F}_{2^m}^*$. This means that
$$R
=\begin{pmatrix}
\lambda_1&   \lambda_1\theta_2&   \lambda_1\theta_3\\
\lambda_2&   a\lambda_2\theta_2&  b\lambda_2\theta_3\\
\lambda_3&   c\lambda_3\theta_2&  d\lambda_3\theta_3\\
\end{pmatrix}.$$
Since the order of $M_1$ is $4$, from Corollary \ref{cor:2}, the conditions of Theorem \ref{thm:3} are sufficient for $M_1$ to be an MDS matrix. Precisely,  $M_1$ is an MDS matrix  if the following conditions hold for $R$:
\begin{enumerate}
\item
$R$ is MDS. As discussed in Lemma \ref{Lemma_count_3_MDS}, we know that $M_1^{'}$ and $R$ are MDS if
$$
a,b,c,d \neq 0,1,
~b\neq a, ~c\neq a, ~d\neq b, ~d\neq c,
~d\neq a^{-1}bc, $$
and $$~d+1 \neq (a+1)^{-1}(b+1) (c+1).$$
\item Given $i,j\in\{1,2,3\}$, the modified matrix $R$ obtained by replacing $i^{th}$ row or $j^{th}$ column or both of $R$ by all ones is non-singular. This condition is equivalent to checking that all $3\times 3$ minors of $M_1$ are non-zero except $R$, whose non-singularity is already taken care of in condition 1. Thus, we arrive at the $15$ inequations placed in Appendix \hyperlink{ap:A}{A}. \label{cond:Counting_4_MDS}
	
\item  No entry of $R$ can be $1$, which is possible if
\begin{eqnarray*}
&&\lambda_1,\lambda_2, \lambda_3 \neq 1, \lambda_1\theta_2 + 1  \neq 0,   \lambda_1\theta_3 + 1  \neq 0,  a\lambda_2\theta_2 + 1  \neq 0\\
&&b\lambda_2\theta_3 + 1  \neq 0, 	c\lambda_3\theta_2 + 1  \neq 0,  d\lambda_3\theta_3 + 1  \neq 0.
\end{eqnarray*}
	
\item Elements in a row or a column of $R$ are distinct, if
\begin{eqnarray*}
&&\theta_2,\theta_3 \neq 1, \theta_2\neq \theta_3, \lambda_1\neq \lambda_2, \lambda_2\neq \lambda_3, \lambda_3\neq \lambda_1 \\
&&b\theta_3 + 1  \neq 0, a\theta_2 + b\theta_3  \neq 0,
c\theta_2 + 1  \neq 0\\
&&d\theta_3 + 1  \neq 0, c\theta_2 + d\theta_3  \neq 0, a\theta_2 + 1  \neq 0\\
&&a\lambda_2 + \lambda_1  \neq 0, c\lambda_3 + \lambda_1  \neq 0, a\lambda_2 + c\lambda_3 \neq 0\\
&&b\lambda_2 + \lambda_1  \neq 0, d\lambda_3 + \lambda_1  \neq 0, b\lambda_2 + d\lambda_3 \neq 0.
\end{eqnarray*}

\item $R-U$ is non-singular, where $U$ is a $3\times 3$ matrix with all entries $1$. This is equivalent to the statement that  $|R-U|=|M_1|\neq 0$.
\end{enumerate}

At this stage, we are prepared to provide an exact count of all $4\times 4$ MDS and involutory MDS matrices over $\mathbb{F}_{2^3}$ and $\mathbb{F}_{2^4}$, respectively. According to the discussion in Section \ref{Sec:Counting_3_MDS}, there are $6554730$ and $18381431250$ MDS matrices of order $3$ over $\mathbb{F}_{2^3}$ and $\mathbb{F}_{2^4}$, respectively. Applying the conditions of Theorem \ref{thm:3} to these $3\times3$ MDS matrices, we determine that there are $720$ and $464227344$ representative MDS matrices of order $4$ over $\mathbb{F}_{2^3}$ and $\mathbb{F}_{2^4}$, respectively. Additionally, the choices for $D_1$ and $D_2$ over $\mathbb{F}_{2^m}$ are $(2^m-1)^7$. Thus, there are $7^7\times 720$ and $15^7\times 464227344$ MDS matrices of order $4$ over $\mathbb{F}_{2^3}$ and $\mathbb{F}_{2^4}$, respectively.

In the following step, we count the number of all $4\times 4$ involutory MDS matrices over $\mathbb{F}_{2^3}$ and $\mathbb{F}_{2^4}$, respectively. Involutory matrices are characterized by the Theorem \ref{thm:2}. There are $48$ and $71856$ representative MDS matrices of order $4$ over $\mathbb{F}_{2^3}$ and $\mathbb{F}_{2^4}$, respectively, by meeting the requirements of Theorem \ref{thm:2}. The total number of diagonal matrices $D_1$ and $D_2$ that meet the requirements of Theorem \ref{thm:2} must also be determined for these representative MDS matrices. According to Theorem \ref{thm:2}, the choices for $D_1$ and $D_2$ over $\mathbb{F}_{2^m}$ are $(2^m-1)^3$. Thus, there are $7^3\times 48$ and $15^3\times 71856$ involutory matrices of order $4$ over $\mathbb{F}_{2^3}$ and $\mathbb{F}_{2^4}$, respectively. Table \ref{Table:1} and Table \ref{Table:2} give the findings of Section~\ref{Sec:Counting_4_MDS}.

\begin{center}
\captionof{table}{Count  of all $4\times 4$ MDS matrices over $\mathbb{F}_{2^m}$.} \label{Table:1}
\begin{tabular}{|c|c|c|c|}\hline
Finite Field & Number of representatives & Choices for $D_1,D_2$ & Total Count \\ \hline
$\mathbb{F}_{2^2}$ & $0$      & -- &$0$ \\
$\mathbb{F}_{2^3}$ & $720$    & $7^7$  &$7^7\times 720$\\
$\mathbb{F}_{2^4}$ & $464227344$ & $15^7$ &$15^7\times 464227344$\\ \hline
\end{tabular}
\vspace{2mm}\\

\captionof{table}{Count  of all $4\times 4$ involutory MDS matrices over $\mathbb{F}_{2^m}$.} \label{Table:2}
\vspace{3mm}
\begin{tabular}{|c|c|c|c|}\hline
Finite Field & Number of representatives& Choices for $D_1,D_2$ & Total Count \\ \hline
$\mathbb{F}_{2^2}$ & $0$        & --  &$0$ \\
$\mathbb{F}_{2^3}$ & $48$       & $7^3$  &$7^3\times48$\\
$\mathbb{F}_{2^4}$ & $71856$    & $15^3$ &$15^3\times71856$\\
\hline
\end{tabular}
\vspace{2mm}\\
\end{center}

\noindent It is important to note that we are not able to provide the count of all $4\times 4$ MDS and involutory MDS matrices over $\mathbb{F}_{2^m}$ for $m\geq 5$ due to the vast search space involved. Also, in a recent work~\cite{Samanta2023}, the authors have provided the explicit formula for counting MDS and involutory MDS matrices over $\mathbb{F}_{2^m}$ for a particular type of $4\times 4$ matrices. However, an explicit count for all $4\times 4$ MDS and involutory MDS matrices is still lacking. Thus, a potential avenue for future work is to provide an explicit formula for counting the MDS and involutory MDS matrices of order $n\geq 4$.

\section{Conclusion}\label{Sec:Conclusion}
Our work addresses the challenge of efficiently generating all $n \times n$ MDS and involutory MDS matrices over $\mathbb{F}_{p^m}$. For this, we propose two algorithms designed for generating $n \times n$ MDS and involutory MDS matrices, respectively, from the representative MDS matrices. Our approach offers the advantage of reducing the search space by focusing on $(n-1) \times (n-1)$ MDS matrices when constructing the representative MDS matrices of order $n$, as opposed to exhaustively searching for $n \times n$ MDS matrices. Additionally, our study is the first to provide an explicit formula for enumerating all $3 \times 3$ MDS matrices over a finite field of characteristic $2$. We also present necessary and sufficient conditions for generating all $3 \times 3$ and $4 \times 4$ MDS and involutory MDS matrices. This work may be expanded for future work to offer necessary and sufficient conditions for the generation of all $n \times n$ MDS and involutory MDS matrices over $\mathbb{F}_{p^m}$, where $n\geq 5$. Furthermore, an explicit formula for counting all MDS and involutory MDS matrices of order $n\geq 4$ is still lacking. Thus, a potential avenue for future research is to provide an explicit formula for the counting of MDS and involutory MDS matrices of order $4$ or higher.






\section*{Acknowledgments}
We would like to express our sincere gratitude to the anonymous reviewers for their detailed and insightful feedback, which significantly contributed to the improvement of this article.

\begin{appendices} \label{ap:A}
\section{The inequalities arising out of Condition \ref{cond:Counting_4_MDS} in Section \ref{Sec:Counting_4_MDS}}

\begin{eqnarray*}
&&a\lambda_1\lambda_2\theta_2 + a\lambda_2\theta_2 + \lambda_1\lambda_2\theta_2 + \lambda_1\theta_2 + \lambda_1 + \lambda_2 \neq 0\hspace{35mm}\\
&&b\lambda_1\lambda_2\theta_3 + b\lambda_2\theta_3 + \lambda_1\lambda_2\theta_3 + \lambda_1\theta_3 + \lambda_1 + \lambda_2 \neq 0\hspace{35mm}\\
&&a\lambda_1\lambda_2\theta_2\theta_3 + b\lambda_1\lambda_2\theta_2\theta_3 + a\lambda_2\theta_2 + b\lambda_2\theta_3 + \lambda_1\theta_2 + \lambda_1\theta_3 \neq 0\hspace{19mm}\\
&&a\lambda_1\lambda_2\theta_2\theta_3 + b\lambda_1\lambda_2\theta_2\theta_3 + a\lambda_1\lambda_2\theta_2 + b\lambda_1\lambda_2\theta_3 + \lambda_1\lambda_2\theta_2 + \lambda_1\lambda_2\theta_3 \neq 0\hspace{8mm}\\
&&c\lambda_1\lambda_3\theta_2 + c\lambda_3\theta_2 + \lambda_1\lambda_3\theta_2 + \lambda_1\theta_2 + \lambda_1 + \lambda_3 \neq 0\hspace{35mm}\\
&&d\lambda_1\lambda_3\theta_3 + d\lambda_3\theta_3 + \lambda_1\lambda_3\theta_3 + \lambda_1\theta_3 + \lambda_1 + \lambda_3 \neq 0\hspace{34mm}\\
&&c\lambda_1\lambda_3\theta_2\theta_3 + d\lambda_1\lambda_3\theta_2\theta_3 + c\lambda_3\theta_2 + d\lambda_3\theta_3 + \lambda_1\theta_2 + \lambda_1\theta_3 \neq 0\hspace{18mm}\\
&&c\lambda_1\lambda_3\theta_2\theta_3 + d\lambda_1\lambda_3\theta_2\theta_3 + c\lambda_1\lambda_3\theta_2 + d\lambda_1\lambda_3\theta_3 + \lambda_1\lambda_3\theta_2 + \lambda_1\lambda_3\theta_3 \neq 0\hspace{7mm}\\
&&a\lambda_2\lambda_3\theta_2 + c\lambda_2\lambda_3\theta_2 + a\lambda_2\theta_2 + c\lambda_3\theta_2 + \lambda_2 + \lambda_3 \neq 0\hspace{30mm}\\
&&b\lambda_2\lambda_3\theta_3 + d\lambda_2\lambda_3\theta_3 + b\lambda_2\theta_3 + d\lambda_3\theta_3 + \lambda_2 + \lambda_3 \neq 0\hspace{30mm}\\
&&bc\lambda_2\lambda_3\theta_2\theta_3 + ad\lambda_2\lambda_3\theta_2\theta_3 + a\lambda_2\theta_2 + c\lambda_3\theta_2 + b\lambda_2\theta_3 + d\lambda_3\theta_3 \neq 0\hspace{10mm}\\
&&bc\lambda_2\lambda_3\theta_2\theta_3 + ad\lambda_2\lambda_3\theta_2\theta_3 + a\lambda_2\lambda_3\theta_2 + c\lambda_2\lambda_3\theta_2 + b\lambda_2\lambda_3\theta_3 + d\lambda_2\lambda_3\theta_3 \neq 0\hspace{-1mm}\\
&&a\lambda_1\lambda_2\theta_2 + c\lambda_1\lambda_3\theta_2 + a\lambda_2\lambda_3\theta_2 + c\lambda_2\lambda_3\theta_2 + \lambda_1\lambda_2\theta_2 + \lambda_1\lambda_3\theta_2 \neq 0\hspace{12mm}\\
&&b\lambda_1\lambda_2\theta_3 + d\lambda_1\lambda_3\theta_3 + b\lambda_2\lambda_3\theta_3 + d\lambda_2\lambda_3\theta_3 + \lambda_1\lambda_2\theta_3 + \lambda_1\lambda_3\theta_3 \neq 0\hspace{12mm}\\
&&bc\lambda_2\lambda_3\theta_2\theta_3 + ad\lambda_2\lambda_3\theta_2\theta_3 + a\lambda_1\lambda_2\theta_2\theta_3 + b\lambda_1\lambda_2\theta_2\theta_3 + c\lambda_1\lambda_3\theta_2\theta_3\\&&\quad +
d\lambda_1\lambda_3\theta_2\theta_3 \neq 0.\hspace{-15mm}
\end{eqnarray*}

\end{appendices}

\medskip

\bibliographystyle{plain}
\bibliography{ref}

\end{document}